\def\bbeta{\mbox{\boldmath $\beta$}}
\def\bdelta{\mbox{\boldmath $\delta$}} 
\def\bepsilon{\mbox{\boldmath $\epsilon$}} 
 \def\bmu{\mbox{\boldmath $\mu$}}
\def\bGamma{\mbox{\boldmath $\Gamma$}}
\def\bkappa{\mbox{\boldmath $\kappa$}}
\def\bpsi{\mbox{\boldmath $\psi$}}
\def\bSigma{\mathbf{\Sigma}}
  \def\bi{\mathbf{i}} 
\def\bE{\mathbf{E}} \def\bK{\mathbf{K}}
 \def\bw{\mathbf{w}} \def\bL{\mathbf{L}} 
\def\bu{\mathbf{u}}  \def\by{\mathbf{y}}  \def\0{\mbox{\bf{0}}}
\def\bS{\mathbf{S}}  \def\bI{\mathbf{I}}\def\be{\mathbf{e}}
\def\bZ{\mathbf{Z}}   \def\bM{\mathbf{M}}
 \def\bX{\mathbf{X}} \def\bx{\mathbf{x}}
 \def\bw{\mathbf{w}}%
             \def\WN{\mbox{WN}}
\def\E{\mbox{E}}       \def\diag{\mbox{diag}}
                \def\vec{\mbox{vec}}
\def\nino{\mbox{Ni\~{n}o} }\def\nina{\mbox{Ni\~{n}a}}
\newtheorem{theorem}{Theorem}
\newtheorem{corollary}{Corollary}
\renewcommand{\baselinestretch}{1.0}
\begin{document}
%------------------------------------
\renewcommand{\baselinestretch}{1.2}
\title{\textbf{On the Estimation of \\ Climate Normals and Anomalies }  }
\author{Tommaso Proietti\footnote{Address for Correspondence: Dipartimento di Economia e Finanza, Via Columbia 2, 00133 Rome, Italy. E-mail: tommaso.proietti@uniroma2.it} \\  Universit\`{a} di  Roma ``Tor Vergata" \and Alessandro Giovannelli \\ Universit\`{a} dell'Aquila}
\date{ }
\maketitle
%%%%%%%%%%%%%%%%%%%%%%%%%%%%%%%%%%%%%%%%%%%%%%%%%%%%%%%%%%%%%%%%%%%%%%%%%%%%%%%%%%%
\begin{abstract}
The quantification of the interannual component of variability in climatological time series is essential for the assessment and prediction of the El Ni\~{n}o - Southern Oscillation phenomenon. This is achieved by estimating the deviation of a climate variable (e.g., temperature, pressure, precipitation, or wind strength) from its normal conditions, defined by its baseline level and seasonal patterns. Climate normals are currently estimated by simple arithmetic averages calculated over the most recent 30-year period ending in a year divisible by 10.
The suitability of the standard methodology has been questioned in the context of a changing climate, characterized by nonstationary conditions.
The literature has focused on the choice of the bandwidth and the ability to account for trends induced by climate change.
The paper contributes to the literature by proposing a regularized real time filter based on local trigonometric regression, optimizing the estimation bias-variance trade-off  in the presence of climate change, and by introducing a class of seasonal kernels enhancing the localization of the estimates of climate normals.
Application to sea surface temperature  series in the \nino  3.4 region and zonal and trade winds strength in the equatorial and tropical Pacific region, illustrates the relevance of our proposal.

\vspace{.5cm}
\noindent
\emph{Keywords}:  Climate change; Seasonality; El Ni\~{n}o - Southern Oscillation; Local Trigonometric Regression.
\renewcommand{\baselinestretch}{1}

\vspace{.5cm}
\noindent \emph{JEL Codes}: C22, C32, C53.
\end{abstract}
\thispagestyle{empty}
\renewcommand{\baselinestretch}{1.0}
\newpage

    %%%%%%%%%%%%%%%%%%%%%%%%%%%%%%%%%%%%%%%%%%%%%%%%%%%%%%%%%%%%%%%%%%%%%%%%%%%%%%%%%%%%%%%%%%%%%%%%%%%%%%%%%%%%%%%%%%%%%%%%%%%%%%%%%%%%
\section{Introduction }
%%%%%%%%%%%%%%%%%%%%%%%%%%%%%%%%%%%%%%%%%%%%%%%%%%%%%%%%%%%%%%%%%%%%%%%%%%%%%%%%%%%%%%%%%%%%%%%%%%%%%%%%%%%%%%%%%%%%%%%%%%%%%%%%%%%%
\label{sec:intro}
Climate anomalies refer to deviations from baseline climatological conditions, including the seasonal cycle, and   quantify interannual climate variability on a year-to-year time scale. A dominant driver of this variability is the El  Ni\~{n}o-Southern Oscillation (ENSO) phenomenon. From a time series analysis perspective, the task of estimating anomalies can be framed as a band-pass filtering problem aimed at isolating the interannual variability. This variability excludes contributions from the long-term climate trend, and from  the seasonal cycle. In the ENSO case, the interannual component is typically defined to include frequencies corresponding to periods longer than one year, and shorter than approximately 5 to 7 years, with the upper periodicity being less rigidly constrained. %A related question concerns whether high-frequency components (e.g., sub-annual variations) should also be excluded from the analysis.

The practices and guidelines for calculating climate normals (CN) are established by the World Meteorological Organization (WMO). As detailed in  \cite{wmo2017,wmo2018}, climate normals are defined as simple arithmetic averages of climate variables, such as temperature and precipitation, calculated over the most recent 30-year  period ending in a year divisible by 10 (e.g., from January 1, 1991, to December 31, 2020). For discussions on the scope and limitations of this definition, see  \cite{kunkel1990climatic}, while \cite{arguez2012noaa} and \cite{tveito2021norwegian} provide comprehensive overviews and discussion on practical implementations.

%The WMO  recommends that climate anomalies be computed using a reference period of at least 30 years.
%What underpins the selection of the 30-year reference period?
According to the \cite{wmo2017}, the 30-year period was chosen primarily for pragmatic reasons: ``the 30-year period of reference was set as a standard mainly because only 30 years of data were available for summarization when the recommendation was first made''. However, the concept of normality inherently presupposes the use of a sufficiently long reference period. Interestingly, as noted in \citet[][Section 2]{wmo2017},  climate normals serve a dual purpose: providing a reference value for the computation of climate anomalies and offering a baseline for predicting future climate conditions.

The U.S. National Oceanic and Atmospheric Administration (NOAA\footnote{https://www.ncei.noaa.gov/products/land-based-station/us-climate-normals}) also produces shorter-period normals, such as the   15-year normals. These shorter-term normals are particularly useful for sector-specific applications, such as energy load forecasting and construction planning. For the Oceanic Ni\~{n}o Index (ONI), climate normals are computed by averaging 30-year base periods centered on the beginning year of each 5-year period.

In sum, the standard methodology underlying climate normals typically involves the application of a fixed bandwidth, a moving reference period, and an equally weighted causal filter, which is an unweighted temporal average of three decades. This filter is updated either once per decade or every five years.

The suitability of the standard methodology has been questioned in the context of a changing climate, characterized by nonstationary conditions. If on the one hand updating the reference period accounts for time variation of climate normals, see e.g. \cite{NOAA2022}, on the other  the choice of the latter, along with the bandwidth and temporal distribution of the weights, have been under scrutiny.
For example, \cite{arguez2011definition} emphasize the importance of assigning greater weight to recent observations to account for potential trends. Specifically, under global warming, estimates of climate normals for temperature are prone to a cold bias.

The earlier literature critically examined the derivation of optimal climate normals (OCN).   \cite{kunkel1990climatic} reviewed these studies and analyzed how the choice of bandwidth influences optimal climate normals. \cite{huang1996long} formally defined OCN as averages over the most recent  years, where the number of years is selected to minimize the mean squared prediction error (MSE) for forecast horizons up to one year. In this framework, forecasts for a given month are derived from climate means for the corresponding season over the most recent $k$ years, with predictive accuracy assessed by the correlation between predicted and observed anomalies.

\cite{livezey2007estimation} argue that the WMO's traditional 30-year climate normals fail to adequately represent current climate conditions and lack predictive utility in the presence of linear trends in climate variables. They advocate for an alternative approach that involves fitting and extrapolating linear or piecewise linear trends, rather than relying on constant 30-year averages. Specifically, they propose using a two-phase regression model   incorporating a ``hinge" function, where the slope in the earlier period is constrained to zero. \cite{wilks2013performance} compare the predictive accuracy of various OCN and WMO methods.
%Recent contributions to this important literature \cite{srinivasan2024moving} .
%The issue is the ability of the current method to capture such change,
Other approaches to the estimation of climate normals are based on local linear regression methods or cubic splines,  applied to annual time series relative to the months of the year, see, e.g., \cite{scherrer2024estimating} and the references therein.

The topic has great relevance since the analysis and the prediction of the ENSO phenomenon is based by and large on time series of anomalies, which enter predictive models such as vector autoregression and principal components analysis (aka  empirical orthogonal function analysis). Nonstationarities due to unaccounted climate trends and seasonal shifts will affect subsequent analysis by distorting both the serial dependence and the size of the anomalies.

The objective of this paper is to devise a (i) linear (ii) time invariant (iii) finite impulse response (FIR) (iv) causal (i.e., real time, or one-sided), filter for estimating monthly climate normals. The filter is based on a novel local trigonometric regression approach, that features a linear trend component along with trigonometric cycles defined at the seasonal frequencies. Its properties depend on the bandwidth parameter (number of most recent observations used in the estimation) and on a shrinkage parameter that regulates the bias-variance trade-off in the presence of a trend. Moreover, a class of seasonal kernels is introduced, that enforces the localization of the estimates.
The selection of the bandwidth and the shrinkage parameter is carried out by minimizing the mean square error incurred in the estimation of the climate normals.

The design of boundary filters is an important topic in seasonal adjustment; in the local polynomial regression framework \cite{proietti2008real} provided a solution for trend extraction filters aiming at minimizing the mean square error of the revisions. This paper proposes a different solution for the design of a local trigonometric filter with the above properties (i)-(iv), based on the minimization of the mean square estimation error. The latter takes explicitly into consideration the serial dependence of climate anomalies, which is  an additional contributions of the current paper, with respect to the OCN literature.

Our empirical illustrations deal with a large dataset of monthly gridded time series of sea surface temperatures (SST)  in the \nino 3.4 region and of zonal and trade wind strength in the equatorial and tropical Pacific. %These series characterize the oceanic and atmospheric components of ENSO.
The arithmetic average of the SST anomaly series across the \nino 3.4 region is the most popular indicator of the oceanic component of ENSO.  El \nino refers to the periodic warming of waters near the South-American equatorial Pacific coast around Christmas and results in  positive anomalies in SST. The opposite phase is called La \nina, during which the sign of the anomalies is reversed.
The wind strength anomalies  are also an important component of the ocean-atmosphere interaction.
In normal conditions zonal winds  blow in east - west directions along the equator driving equatorial currents, thereby pushing warm waters to flow eastwards, while trade winds blow from tropical high pressure regions towards the equatorial low pressure zone in north-east to south-west (northeasterly trades) and in south-east to north-west directions (southeasterly trades).   This is strengthened during the La \nina \:phase, reinforcing the upwelling of cold water in the Eastern Pacific, during which positive wind strength anomalies emerge. Hence, in the wind series, a positive anomaly is a manifestation of the cold phase of ENSO. During a positive El \nino phase the winds weaken or even reverse, and  warm waters flow eastwards and this is evidenced by negative wind strength anomalies.
A thorough description of ENSO is found in \cite{neelin2010climate} and \cite{mcphaden2021nino}.

The structure of the paper is the following.
Section \ref{sec:opt}  addresses the optimality of the official climate normal  estimates: it shows that (a time invariant version of) the climate normal filter is optimal in a  local trigonometric regression framework, according to which a uniform kernel is assumed,  the conditional mean of the series features a constant level and a seasonal cycle, consisting of the fundamental frequency and all the harmonics, and the bandwidth is  fixed.
The optimality no longer holds if a linear trend is present, in which case the filter has a bias (Section \ref{sec:ltrtrend}); the optimal boundary filter in this situation is obtained by adding a trend adjustment filter to the original CN filter. The bias reduction comes at the expense of the estimation error variance and this opens the way to a regularization of the filter (Section \ref{sec:shrink}), that shrinks it towards the constant CN filter. In section \ref{sec:kernels} we address the question of defining a kernel for the local trigonometric regression model. Section \ref{sec:mse} discusses the selection of the bandwidth and the shrinkage parameter.
The estimation of climate normals and anomalies for SST and wind stress is considered in Section \ref{sec:data}. Section \ref{sec:concl} offers some conclusions.

%%%%%%%%%%%%%%%%%%%%%%%%%%%%%%%%%%%%%%%%%%%%%%%%%%%%%%%%%%%%%%%%%%%%%%%%%%%%%%%%%%%%%%%%%%%%%%%%%%%%%%%%%%%%%%%%%%%%%%%%%%%%%%%%%%%%%%%%%%%%%%%%%%%%%%%%%%%%%%%%%%%%%%%%%%%%%
\section{Optimality of Climate Normal Estimates}
  \label{sec:opt}

Let $\{y_t, t=1, 2, \ldots, n\},$ denote a climate time series observed in discrete time at a sub-annual frequency, and assume that it can be decomposed as
$y_t = \mu_t+\psi_t$, where $\mu_t$ denotes the climate normal at time $t$, and $\psi_t$ the corresponding anomaly. The component $\mu_t$ captures both the long run evolution of the variable and the seasonal cycle, with period equal to $s$ time units. For monthly data, $s=12$; for daily data $s=365$, after a correction for leap years (e.g., averaging the values of February 28 and 29). The anomaly is defined as the deviation of $y_t$ from normal conditions.

Figure \ref{fig:sstnoaa} displays the monthly sea surface temperature time series (degrees Celsius) in the Ni\~{n}o 3.4 region  from January 1950 to December 2024, along with the climate normals estimated by NOAA  (upper panel). The variability of the latter represents only a minor portion of the variability of the series, which dominated by the interannual component, shown in the bottom panel.
\begin{figure}[h]
\begin{center}
\includegraphics[width=1\textwidth]{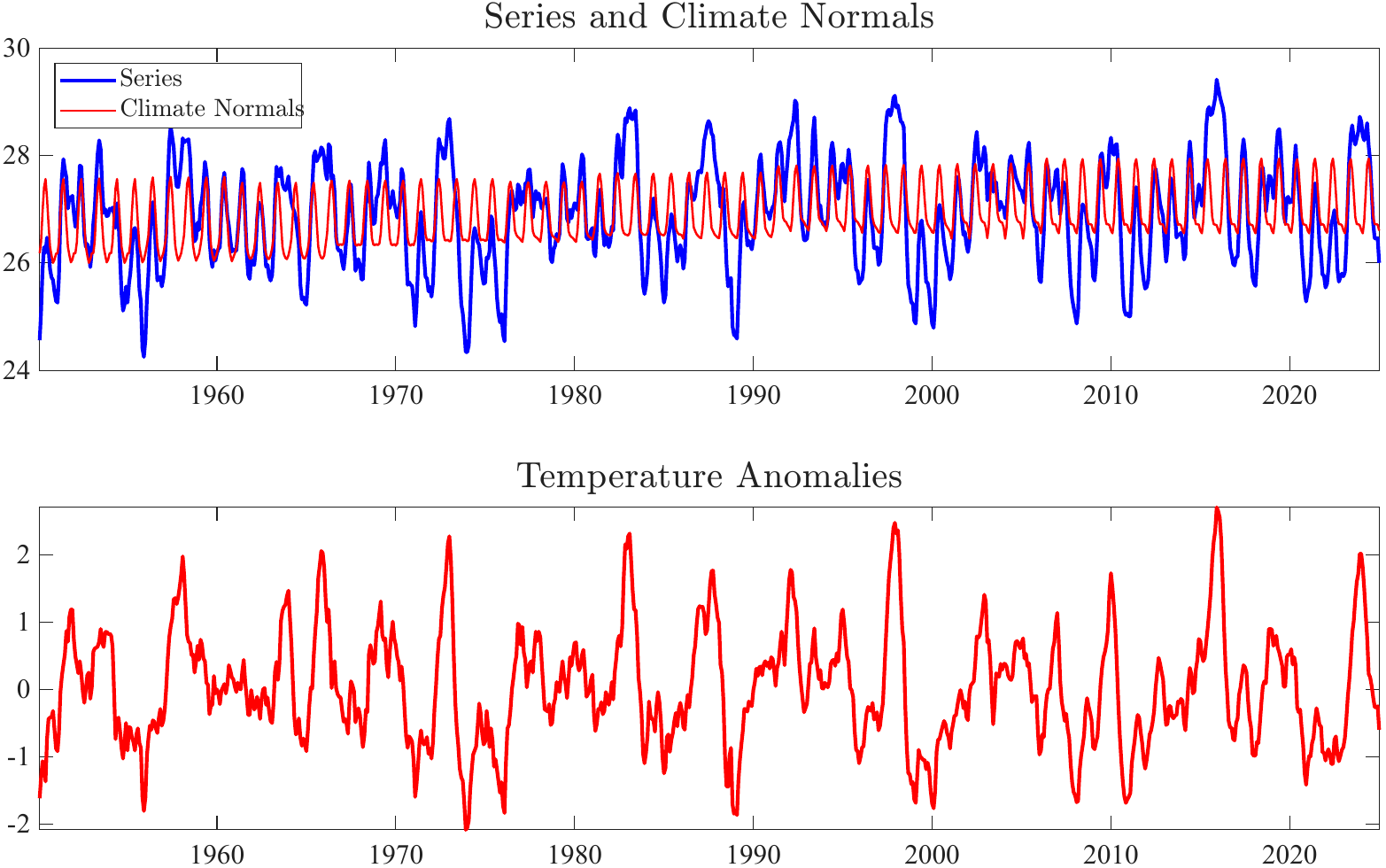}
\end{center}
\caption{Sea Surface Temperatures, Ni\~{n}o 3.4 region. NOAA climate normals series (top panel) and anomalies (bottom panel). \label{fig:sstnoaa}}
\end{figure}

The standard methodology adopts the estimator
\begin{equation}
\hat{\mu}_t^* = w_c(L^s)  L^{s(1 +\lfloor (t-t_0)/s\rfloor)}y_t, \;\;\; w_c(L) = \frac{1}{m+1}\sum_{j=0}^{m} L^j, \label{eq:w0}
\end{equation}
where $L$ is the lag operator, such that $L^k y_t=y_{t-k}$,   $t_0$ is the last season of the reference period, $\lfloor z \rfloor$ denotes the integer part of $z$, $w_c(L^s) = \frac{1}{m+1}(1+L^s+L^{2s} + \cdots+L^{sm})$ (the subscript $c$ stands for \emph{concurrent}).

For instance, the temperature climate normal of December 2024 is the arithmetic average of the $m+1=30$ December values in the years 1991-2020. It should be recalled that $t_0$ is updated every 10 (5) years at the beginning of a new decade (five year period).  The filter (\ref{eq:w0}) is time varying, since the lag operator  $L^{s(1 +\lfloor (t-t_0)/s\rfloor)}$ increases with $t$, modulo $10 s$.

What is the statistical model underpinning the estimator in (\ref{eq:w0})?
First and foremost, keeping the reference period fixed for a decade adds an artificial source of nonstationarity to the climate normals.
 A running mean estimator is more adequate, as it results from the application of a time invariant filter. Hence, in the sequel we discuss the optimality of the concurrent variant of (\ref{eq:w0}) which does not suffer from this limitation,
\begin{equation}
\hat{\mu}_t = w_c(L^s)y_t,  \;\;\; w_c(L^s)\frac{1}{m+1}(1+L^s+L^{2s} + \cdots+L^{sm}).
\label{eq:w1}
\end{equation}
To estimate trend and seasonality, the filter $w(L^s)$ computes the running mean of $m+1$ consecutive values of a specific season (month or day of the year). In other words, the monthly (daily) series $y_t$ is divided into 12 (365, after a correction for leap years) annual time series, and the climate normal is estimated by the running average of 30 consecutive values. %$w(L)$ filter, with $L$ acting on the index $\tau$: $\hat{\mu}_t = \frac{1}{m+1} \sunw(L^s)y_t,.

An advantage of the fixed reference period estimator is that the climate normals need not be recomputed for the next five or ten years; however, this is a  minor argument as the computational costs of computing the running climate normals in real time is ignorable.
In general the properties of  a time invariant linear filter are better understood, and thus we will refer to it henceforth. Notice that the anomaly corresponding to (\ref{eq:w1}),
$\hat{\psi}_t = y_t -\hat{\mu}_t$, is %if set $\hat{\mu}_t = \frac{1}{m+1} \sum_{j=0}^{m} y_{t-js}$, the simple moving average of the current and last $m$ values of a time series for the same season as that occurring at time $t$, the anomaly
$$\hat{\psi}_t = \frac{m}{m+1}\Delta_s y_t + \frac{m-1}{m+1}\Delta_s y_{t-s}+\cdots + \frac{2}{m+1}\Delta_s y_{t-s(m-1)}+\frac{1}{m+1}\Delta_s y_{t-sm},$$
where we have denoted $\Delta_s = 1- L^s$ (the seasonal differencing operator). Hence, the anomaly is estimated by  a moving average of $m$ year-on-year changes with linearly declining weights.

\subsection{Local trigonometric regression  }
 \label{ssec:ltr}

Let us turn to the question concerning the optimality of the moving average filter $w_c(L^s)$.  The filter can be interpreted as a boundary adaptation of the Nadaraya-Watson estimator of the climate normal of the season (e.g., month) occurring at time $t$, adopting a uniform kernel, see
\citet{fan1996local} and \citet{ruppert2003semiparametric}. %This estimator is interpreted as the one-sided (or real-time) Nadaraya-Watson, see \cite{ruppert2003semiparametric}, using a uniform kernel, of the climate normal of the season (e.g., month) occurring at time $t$.
%Indeed it can be considered the fixed bandwidth, uniform kernel, optimal estimator of Its two-sided counterpart, in  a local constant framework with serially uncorrelated noise.
%This two-sided estimator is not available in real time, and several strategies have been proposed for its adaptation: exogenous forecast extensions, modification of the filter weights aiming at minimizing the revision error as new information become available.
%These strategies render the anomaly serie heterogeneous, in that the properties in the middle of the sample are different from those at the boundaries, and can complicate the prediction.

The corresponding two-sided local level estimator of $\mu_t$ based on $q=2sm+1$ consecutive observations $y_{t+j}, j = 0, \pm 1, \ldots, \pm sm,$ around time $t$, is $w_q(L^s) = \frac{1}{2m+1}\sum_{j=-m}^{m} L^{sj}$.

For $s$ even, e.g. $s=12$, let us assume that in the neighbourhood of time $t$  the climate normals have the following representation,
\begin{equation}
\mu_{t+j} = \beta_0 + \sum_{k=1}^{s/2-1} \left\{\gamma_k \cos\left(\frac{2\pi j}{s}k \right) +\gamma_k^* \sin\left(\frac{2\pi j}{s}k \right)\right\}+\gamma_{s/2} \cos\left(\pi j\right), \label{eq:ltr}
\end{equation}
for $j = 0, \pm 1, \ldots, \pm  sm,$
while $\psi_t$ is a weakly stationary process with mean zero and variance $\sigma^2$. This implies that the value of the  climate normal at time $t$ is $\mu_{t} = \beta_0 + \sum_{k=1}^{s/2} \gamma_k$.
We refer to (\ref{eq:ltr}) as a local trigonometric regression (LTR) model, and the corresponding optimal signal extraction filter will be labelled LTR.
The model features  a level component (intercept), and $s/2$ trigonometric cycles defined at the fundamental frequency, $2\pi/s$ (one cycle per year), and the harmonic frequencies $2\pi k/s$ ($k$ cycles per year\footnote{If $s$ is odd the model  becomes
$$
\mu_{t+j} = \beta_0 + \sum_{k=1}^{ (s-1)/2 } \left\{\gamma_k \cos\left(\frac{2\pi j}{s}k \right) +\gamma_k^* \sin\left(\frac{2\pi j}{s}k \right)\right\},  j = 0, \pm 1, \ldots, \pm  sm.$$}), $k= 1, 2, \ldots, s/2$.
%An equivalent model is the dummy seasonal model  $\mu_{t+j} =  \sum_{k=1}^{s} \delta_k D_{kt},  j = 0, \pm 1, \ldots, \pm (m-1), $ so that $\mu_t = \delta_k$ modulo $s$, and $D_{kt}$ takes value 1 in season $k$ and is zero elsewhere.

\begin{theorem}
Let $y_{t+j} = \mu_{t+j}+\psi_{t+j}$  for $j=0\pm 1, \pm sm$, where $\mu_{t+j}$ is given by \ref{eq:ltr} and $\psi_{t+j}$ is a zero mean  weakly stationary process. Then,  \begin{equation}
    \begin{array}{llll}
    \tilde{\mu}_t &=& \hat{\beta}_0+   \sum_{k=1}^{s/2} \hat{\gamma}_k & \\
                  &=& \sum\limits_{j=-sm}^{sm} w_{qj} y_{t-j},  & %\\
             w_{qj}  = \left\{\begin{array}{lr}\frac{1}{2m+1},  & j = 0,\pm s, \pm 2s, \ldots, \pm  sm,\\
                                                              & 0, \mbox{ otherwise.} \\
                                                              \end{array} \right.
    \end{array} \label{eq:cn2sided}
    \end{equation}
is the optimal least squares estimator of the climate normal at time $t$, based on a two-sided sample, and $\{w_{qj}, j =   0, \pm 1,\ldots, \pm sm\}$ is the impulse response function of the filter.

\end{theorem}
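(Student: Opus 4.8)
The plan is to recognize the statement as an ordinary least squares (OLS) exercise and to exploit the fact that, because no trend term is present, the regression functions in (\ref{eq:ltr}) depend on $j$ only through $j \bmod s$. Concretely, $\tilde\mu_t$ is obtained by fitting the $s$ parameters $(\beta_0,\gamma_1,\gamma_1^*,\ldots,\gamma_{s/2-1},\gamma_{s/2-1}^*,\gamma_{s/2})$ to the $2sm+1$ observations $y_{t+j}$, $j=0,\pm 1,\ldots,\pm sm$, by minimizing $\sum_{j=-sm}^{sm}(y_{t+j}-\mu_{t+j})^2$, and then evaluating the fitted normal at $j=0$, where $\mu_t=\beta_0+\sum_{k=1}^{s/2}\gamma_k$. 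First I would note that each of $\cos(2\pi jk/s)$, $\sin(2\pi jk/s)$ and $\cos(\pi j)$ is periodic in $j$ with period $s$, so $\mu_{t+j}$ is constant across all $j$ sharing the same residue $a=j\bmod s$; write $m_a$ for this common value.

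Next I would argue that the trigonometric parametrization is merely an invertible change of basis for the vector of seasonal means $(m_0,\ldots,m_{s-1})$. The system $\{1\}\cup\{\cos(2\pi a k/s),\sin(2\pi a k/s)\}_{k=1}^{s/2-1}\cup\{\cos(\pi a)\}$ consists of exactly $s$ functions on the $s$ seasons $a\in\{0,1,\ldots,s-1\}$, and the standard discrete orthogonality relations (summing over one full period) show they are mutually orthogonal, hence a complete basis; the map $(\beta_0,\gamma,\gamma^*,\gamma_{s/2})\mapsto(m_0,\ldots,m_{s-1})$ is therefore bijective, independently of the data. Consequently the OLS problem in the trigonometric parameters is equivalent to the saturated seasonal-means problem $\min_{m_0,\ldots,m_{s-1}}\sum_{a=0}^{s-1}\sum_{j\equiv a}(y_{t+j}-m_a)^2$. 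Because distinct seasons involve disjoint parameters under the uniform (equal) weighting of the window, the normal equations separate across $a$, and each $\hat m_a$ is simply the arithmetic mean of the $y_{t+j}$ with $j\equiv a\pmod s$.

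Finally I would evaluate at the target season $a=0$: the indices $j\in\{-sm,\ldots,sm\}$ with $j\equiv 0\pmod s$ are exactly $j=0,\pm s,\pm 2s,\ldots,\pm sm$, which number $2m+1$, so $\tilde\mu_t=\hat m_0=\frac{1}{2m+1}\sum_{j}y_{t+j}$ over these indices (equal to the stated $\sum_j w_{qj}y_{t-j}$ by symmetry of the index set), giving the impulse response $w_{qj}=1/(2m+1)$ on $j=0,\pm s,\ldots,\pm sm$ and $0$ otherwise. The main point to verify carefully is the previous step, namely that the full real trigonometric system, including the Nyquist term $\cos(\pi a)=(-1)^a$ present only when $s$ is even, is orthogonal and spans all functions on $\mathbb{Z}/s\mathbb{Z}$, so that the reparametrization is genuinely full rank; completeness here is essential, since dropping any harmonic would couple the seasons and destroy the separation. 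A secondary subtlety is that the seasonal counts over the symmetric window are mildly unbalanced — season $0$ receives $2m+1$ observations while every other season receives $2m$, because the endpoints $\pm sm$ both fall in season $0$ — but this does not affect the separation of the least-squares problem, and it is precisely the uniform kernel that makes equal within-season weighting optimal. The weak stationarity and zero mean of $\psi_{t+j}$ play no role in identifying the OLS weights; they matter only later, in the error analysis.
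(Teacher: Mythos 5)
Your proposal is correct, and it reaches the result by a genuinely different route from the paper. The paper (working with $s=12$) proceeds by brute-force matrix algebra: it stacks the design as $\bX = (\bX_f', \bx_0, \bX_p')'$, invokes the discrete orthogonality of the trigonometric columns to get $\bX_p'\bX_p = \diag\left(sm, \frac{sm}{2}, \ldots, \frac{sm}{2}, sm\right)$ and hence $\bX'\bX = 2\bX_p'\bX_p + \bx_0\bx_0'$, inverts this by the Sherman--Morrison--Woodbury formula, and finally identifies the filter weights through the Dirichlet-type identity stating that $\frac{1}{2} + \sum_{k=1}^{5}\cos\left(\frac{2\pi k}{12}j\right) + \frac{1}{2}\cos(\pi j)$ equals $6$ when $j \equiv 0 \pmod{12}$ and vanishes otherwise. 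You instead observe that, absent a trend, every regressor is periodic in $j$ with period $s$, so the column span of the design is exactly the span of the $s$ seasonal dummies; completeness and orthogonality of the discrete trigonometric basis on $\mathbb{Z}/s\mathbb{Z}$ (including the Nyquist term, which you rightly single out as essential for full rank when $s$ is even) make the reparametrization bijective, the normal equations separate season by season, and $\tilde{\mu}_t = \hat{m}_0$ is the plain average of the $2m+1$ observations with $j \equiv 0 \pmod s$. Your seasonal-count remark (season $0$ gets $2m+1$ points, all others $2m$) is accurate and correctly identified as harmless, and you are right that stationarity of $\psi_t$ plays no role in deriving the weights. What each approach buys: yours is more elementary and transparent, works uniformly in $s$ (even or odd), and would deliver the paper's Corollary on the one-sided DAF weights $w_{cj} = \frac{1}{m+1}$ by the identical separation argument with no extra work; the paper's heavier computation, on the other hand, sets up the objects ($\bx_0$, $\bX_p$, $\bX_c$, the SMW manipulations) that are reused verbatim in Section 3 for the trend-adjustment filter and in Section 4 for the kernel-weighted filters, settings where the linear trend regressor destroys the periodicity on which your dummy-variable reduction relies.
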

\begin{proof}

%We now prove that the least squares estimator of $\mu_t$, based on $2m+1$ consecutive years ($m$ past, $m$ future, and the current one),
%is
%\begin{equation}
%    \begin{array}{llll}
%    \tilde{\mu}_t &=& \hat{\beta}_0+   \sum_{k=1}^{s/2} \hat{\gamma}_k & \\
%                  &=& \sum\limits_{j=-sm}^{sm} w_j y_{t-j},  & %\\
%             w_j  = \left\{\begin{array}{lr}\frac{1}{2m+1},  & j = 0,\pm s, \pm 2s, \ldots, \pm sm,\\
%                                                              & 0, \mbox{ otherwise.} \\
%                                                              \end{array} \right.
%    \end{array} \label{eq:w2}
%    \end{equation}

Focusing on the monthly case ($s=12$),  let $q = 24 m+1$, and define the $q\times 1$ vectors $\by = (y_{t+sm},\ldots,y_t,\ldots,y_{t-sm})'$, $\bpsi =  (\psi_{t+sm},   \ldots,   \psi_t, \ldots, \psi_{t-sm})'$, the $12\times 1$ vector $\bdelta = (\beta_0, \gamma_1, \gamma_1^*, \ldots, \gamma_5, \gamma_{5}^*, \gamma_6)'$, we write $\by = \bX\bdelta+\bpsi$, where $\bX = \left(\bX_f', \bx_0, \bX_p'\right)'$, with $$\bx_0' = (1, 1, 0, 1, 0, 1, 0, 1, 0, 1, 0, 1),$$ $\bX_p$ is $(sm)\times 12$ with $j$-th row equal to $$\left(1, \cos\frac{\pi}{6}j,\sin\frac{\pi}{6}j,\cos\frac{\pi}{3}j,\sin\frac{\pi}{3}j,\cos\frac{\pi}{2}j,\sin\frac{\pi}{2}j, \cos\frac{2\pi}{3}j,\sin\frac{2\pi}{3}j, \cos\frac{5\pi}{6}j,\sin\frac{5\pi}{6}j, \cos\pi j\right),$$
$j  = 1, 2, \ldots, sm,$
$\bX_f = \bE\bX_p\bS$, where $\bE$ is the $sm\times sm$ exchange matrix and $$\bS= \diag(1,1,-1,1,-1,1,-1,1,-1,1,-1,1)$$ is a matrix switching   the sign of the sine terms. These matrices have the following properties: $\bE'=\bE, \bE^2 = \bI_{sm}$, where $\bI_{sm}$ is the identity matrix of dimension $sm$, and $\bS\bS=\bI_{12}$.

The optimal filter weights   for estimating $\mu_t$ from the two-sided sample of $q$ observations $\by$, $\bw_q = \{w_{qj}, j=-sm, \ldots, -1,0, 1, \ldots, sm\}$, are the elements of the vector
$\bw_q = \bX(\bX'\bX)^{-1}\bx_0$, so that $\tilde{\mu}_t = \bw_q'\by$, or, equivalently,  $\tilde{\mu}_t = \bx_0'\hat{\bdelta}$,  with  $\hat{\bdelta} =(\bX'\bX)^{-1}\bX'\by$.

%The filter weights can be partitioned into are $\bw' = \bx_0' (\bX'\bX)\bX'$
Since $\bX_p'\bX_p = \diag\left(sm, \frac{sm}{2}, \ldots, \frac{sm}{2}, sm\right)$, as implied by the orthogonality of the trigonometric functions, see  \citet[Theorem 3.1.1]{fuller2009introduction}, it follows that $\bX_f'\bX_f = \bX_p\bX_p$, and thus $\bX'\bX = 2 \bX_p'\bX_p+\bx_0\bx_0'$.

By the Sherman-Morrison-Woodbury inversion formula, see  \cite{henderson1981deriving},
$$(\bX'\bX)^{-1} = \frac{1}{2}(\bX_p'\bX_p)^{-1}-\frac{1}{4}\frac{(\bX_p'\bX_p)^{-1}\bx_0\bx_0'(\bX_p'\bX_p)^{-1}}{ 1+0.5 \bx_0'(\bX_p'\bX_p)^{-1}\bx_0}.$$

Hence, denoting $C_m = 0.5 \bx_0'(\bX_p'\bX_p)^{-1}\bx_0\equiv \frac{1}{2m}$,
\begin{equation}
\begin{array}{lll}
\bw_q' &=& \frac{1}{1+C_m} 0.5\bx_0'(\bX_p'\bX_p)^{-1} \left(\bX_f',\bx_0, \bX_p'\right)\\
    &=& \frac{1}{1+C_m} \left(0.5\bx_0'(\bX_p'\bX_p)^{-1}\bX_p'\bE,C_m, 0.5\bx_0'(\bX_p'\bX_p)^{-1}\bX_p'\right)\\
    &=&(\bw_p'\bE, \frac{1}{2m+1}, \bw_p'),
\end{array} \label{eq:w2}
\end{equation}
where  $\bw_p' = \frac{0.5}{1+C_m} \bx_0'(\bX_p'\bX_p)^{-1}\bX_p'$ has elements % $\{w_{jp}, j =1, \ldots, m\}$
$$w_{pj} = \left\{\begin{array}{ll} \frac{1}{2m+1}, & \mbox{ if } j=12, 24, \ldots, 12m, \\
0, & \mbox{otherwise}.\end{array} \right.$$
This is so, since
$$ \frac{1}{2} + \sum_{k=1}^5 \cos\left(\frac{2\pi k}{12} j\right) + \frac{1}{2}\cos(\pi j)=\left\{\begin{array}{ll} 6, & \mbox{ if } j=0,12, 24, \ldots, 12m, \\
0, & \mbox{otherwise}.\end{array} \right.$$
In conclusion, the arithmetic moving average filter $w_q(L^s) = \frac{1}{2m+1}\sum_{j=-m}^{m} L^{sj}$, with impulse response $w_{qj} = \frac{1}{2m+1},$ $j = 0,\pm 12, \pm 24, \ldots, \pm 12m$, and $w_{qj} = 0$, otherwise, the optimal two-sided  signal extraction filter of the climate normals in (\ref{eq:ltr}).

\end{proof}

\begin{corollary}
The real time CN estimator, based on $(y_t, y_{t-1}, \ldots, y_{t-sm})'$,  and the corresponding least squares direct asymmetric filter (DAF) are, respectively,
        \begin{equation}
    \begin{array}{lll}
    \tilde{\mu}_{t|t}  &=& \sum\limits_{j=0}^{sm} w_{cj} y_{t-j},  \;\;\;
             w_{cj}  = \left\{\begin{array}{lr}\frac{1}{m+1},  & j = 0, s, 2s, \ldots,  sm,\\
                                                               0,& \mbox{ otherwise.} \\
                                                              \end{array} \right.
    \end{array} \label{eq:cn1sided}
    \end{equation}
\end{corollary}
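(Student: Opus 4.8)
The plan is to replay the argument of the Theorem almost verbatim, the single structural change being that the real-time sample discards the future block of observations. I would form the $(sm+1)\times s$ one-sided design matrix $\bX_c=(\bx_0,\bX_p')'$ by stacking the present row $\bx_0'$ on top of the past block $\bX_p$ of the Theorem (whose $j$-th row carries the trigonometric regressors at lag $j$, $j=1,\dots,sm$). With $\by_c=(y_t,y_{t-1},\dots,y_{t-sm})'=\bX_c\bdelta+\bpsi_c$, the least squares fit of the level at time $t$ is $\tilde{\mu}_{t|t}=\bx_0'\hat{\bdelta}$, $\hat{\bdelta}=(\bX_c'\bX_c)^{-1}\bX_c'\by_c$, so the DAF is the weight vector $\bw_c=\bX_c(\bX_c'\bX_c)^{-1}\bx_0$, and the goal is to show its entries are $w_{cj}$.

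First I would evaluate the Gram matrix. Because the past block spans the same $sm$ consecutive lags as in the Theorem, the orthogonality of the trigonometric regressors (\citet[Theorem 3.1.1]{fuller2009introduction}) still yields $\bX_p'\bX_p=\bP:=\diag(sm,\tfrac{sm}{2},\dots,\tfrac{sm}{2},sm)$, and the present observation adds the rank-one term $\bx_0\bx_0'$. Hence $\bX_c'\bX_c=\bP+\bx_0\bx_0'$, which is exactly the two-sided expression $2\bP+\bx_0\bx_0'$ with a single copy of $\bP$ in place of two: the missing future block is the whole difference. This is positive definite for $m\ge1$, and I would invert it by Sherman--Morrison--Woodbury \citep{henderson1981deriving}, as before.

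The substantive step is the scalar bookkeeping, which is where $\tfrac{1}{2m+1}$ becomes $\tfrac{1}{m+1}$. A direct evaluation gives $\bx_0'\bP^{-1}\bx_0=\tfrac1m$ (this is $2C_m$; the factor $\tfrac12$ in $C_m$ is absent because $\bP$ now appears once), so the rank-one correction collapses the left factor to $\bx_0'(\bX_c'\bX_c)^{-1}=\tfrac{m}{m+1}\bx_0'\bP^{-1}$. Multiplying on the right by $\bX_c'=(\bx_0,\bX_p')$, the weight on $y_t$ is $\tfrac{m}{m+1}\cdot\tfrac1m=\tfrac{1}{m+1}$, while the weight on $y_{t-j}$ is $\tfrac{m}{m+1}$ times the $j$-th entry of $\bx_0'\bP^{-1}\bX_p'$. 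Since $\bx_0$ annihilates the sine columns, that entry reduces to $\tfrac{2}{sm}$ times the same Dirichlet-type sum $\tfrac12+\sum_{k=1}^{5}\cos(\tfrac{2\pi k}{12}j)+\tfrac12\cos(\pi j)$ invoked in the Theorem, which equals $6$ at $j=s,2s,\dots,sm$ and $0$ otherwise; thus the past-block weights are $\tfrac{1}{m+1}$ at the seasonal lags and zero elsewhere. Collecting terms reproduces $w_{cj}$ and identifies the DAF with the concurrent running-mean filter $w_c(L^s)$ of (\ref{eq:w1}).

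I expect no genuine obstacle: the statement is a corollary precisely because the algebra mirrors the Theorem. The only points needing care are confirming that the trigonometric orthogonality survives on the one-sided window $j=1,\dots,sm$ (it does, as this is still a full set of $sm$ consecutive points), and tracking the factor of two so that $C_m=\tfrac{1}{2m}$ is correctly replaced by $\bx_0'\bP^{-1}\bx_0=\tfrac1m$. A useful sanity check is that the front scalar changes from $\tfrac{m}{2m+1}$ to $\tfrac{m}{m+1}$, rescaling the uniform seasonal weight from $\tfrac{1}{2m+1}$, spread over the $2m+1$ years of the symmetric window, to $\tfrac{1}{m+1}$, spread over the $m+1$ years available in real time.
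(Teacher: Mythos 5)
Your proposal is correct and takes essentially the same route as the paper's own proof: the same concurrent design matrix $\bX_c=(\bx_0, \bX_p')'$, the rank-one Gram identity $\bX_c'\bX_c=\bX_p'\bX_p+\bx_0\bx_0'$, the Sherman--Morrison--Woodbury inversion, and the same Dirichlet-type cosine sum identifying the nonzero weights at the seasonal lags. Your explicit bookkeeping of $\bx_0'(\bX_p'\bX_p)^{-1}\bx_0=\frac{1}{m}$ and of the front factor $\frac{m}{m+1}$ (versus $\frac{m}{2m+1}$ in the two-sided case) merely spells out what the paper compresses into ``after some elementary manipulation,'' and it is accurate.
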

\begin{proof}
Let $\bX_c = (\bx_0', \bX_p')'$,  where the subscript $c$ stands for `concurrent'. The DAF filter is obtained as $\bw_c = \bX_c(\bX_c'\bX_c)^{-1}\bx_0,$ giving the real time estimate $\tilde{\mu}_{t|t} = \bw_c'\by_c$, where $\by_c = (y_t, y_{t-1}, \ldots, y_{t-sm})'$.
Since $\bX_c'\bX_c = \bX_p'\bX_p+ \bx_0\bx_0'$, and applying the  Sherman-Morrison-Woodbury inversion formula, after some elementary manipulation, the direct asymmetric filter
can be written
$$\bw_c' = \frac{1}{1+\bx_0'(\bX_p'\bX_p)^{-1}\bx_0} \bx_0'(\bX_p'\bX_p)^{-1}(\bx_0, \bX_p'),$$
and can similarly be shown to have weights $w_{cj} = \frac{1}{m+1}$, for $j = 0, 12, 24, \ldots, 12 m,$ and $w_{cj} = 0$, otherwise.
\end{proof}

%In conclusion, in the monthly case, the   climate normals extraction filter, $w(L^{12}) = \frac{1}{m+1} (1+L^{12} + L^{24} + \cdots + L^{12 m})$ is the LTR optimal filter for extracting the climate normal component, under the stated assumptions (and using a uniform kernel).

\section{Local Trigonometric Regression with a Linear Trend }
\label{sec:ltrtrend}
If a local linear trend is present, the climate normal filter (\ref{eq:cn1sided}) is no longer optimal. In particular, it will suffer from a bias component that depends on the size of the drift term.
Let the  model for climate normals be
\begin{equation}
\mu_{t+j} = \beta_0 + \beta_1 j + \sum_{k=1}^{s/2-1} \left\{\gamma_k \cos\left(\frac{2\pi j}{s}k \right) +\gamma_k^* \sin\left(\frac{2\pi j}{s}k \right)\right\}+\gamma_{s/2} \cos\left(\pi j\right),  \label{eq:ltrt}
\end{equation}
for $j = 0, \pm 1, \ldots, \pm  sm$,
so that, defining the variable $\mathbf{j}$, taking value $j$ at time $t+j$,  $\by =  \bX \bdelta + \beta_1 \mathbf{j} + \bpsi$.

We now show that the two sided filter $w_q(L^s) = \frac{1}{2m+1}\sum_{j=-m}^{m} L^{sj}$ is still optimal, but the real-time filter is different from $w(L^s)$. In particular,
the least squares real time (DAF) filter for estimating the climate normals is now obtained by correcting the equally weighted MA filter with an adjustment filter, giving
$$\tilde{\mu}_{t|t}^{(T)} = \sum_{j=0}^{sm} w_{cj}^{(T)} y_{t-j}, \;\;\; w_{cj}^{(T)} = w_{cj} + w_{aj},$$
where $w_{cj}  = \frac{1}{m+1},   j = 0, s, 2s, \ldots,  sm,$ and $w_{cj} = 0$, otherwise,
and the adjustment weights, $w_{aj}$ have an analytic formula provided below.

Including the trend does not alter the bi-directional filter: this can be seen by considering a reformulation of the local trigonometric model with linear trend, where $\mathbf{j}^*$ is replaced by $\mathbf{j}^* = \bM \mathbf{j}$, $\bM = \bI-\bX(\bX'\bX)^{-1} \bX'$, which is the residual from the projection onto $\bX$;
$\by = \bX \bdelta + \beta_1  \mathbf{j}^*+ \bepsilon$,
where  $\bepsilon = \bM\bpsi$.

It holds that $\tilde{\mu}_t = \bx_0'\hat{\bdelta},$ $\hat{\bdelta} = (\bX'\bX)^{-1}\bX'\by$, since $j_0^* = 0$, and due to the orthogonality of the regressors, the least squares estimate of $\bdelta$ is unchanged. Hence, the optimal two-sided filter is (\ref{eq:cn2sided}).

However, the direct asymmetric filter is different from $\bw_c$, as we show next.
Let  $\bM_c = \bI-\bX_c(\bX_c'\bX_c)^{-1} \bX_c'$ and $ \mathbf{j}_c = (0,1, \ldots, sm)'$.
Orthogonalizing the trend regressor and setting $\bM_c \mathbf{j}_c = \mathbf{j}^*_c$, the first element of the vector,  $j^*_{0c}$ is different from zero, and the optimal filter is
\begin{equation}
\begin{array}{lll}
\bw_c^{(T)} &=& \left( \bX_c,  \mathbf{j}_c^* \right)\left(\begin{array}{cc} \bX_c'\bX_c &  \0  \\ \0 & \mathbf{j}_c^{*'}\mathbf{j}_c^* \end{array}  \right)^{-1}
\left(\begin{array}{c} \bx_0     \\  u_{0c} \end{array}  \right)
\\
 &=& \bw_c + \bu_c(\mathbf{j}_c^{*'}\mathbf{j}_c^*)^{-1}j_{0c}^*\\
  &=& \bw_c + \bw_a,\\
 \end{array}
 \label{eq:cn1sidedt}
\end{equation}
where $\bw_c$ was derived above and   $\bw_a =  \mathbf{j}_c^*(\mathbf{j}_c^{*'}\mathbf{j}_c^*)^{-1}j_{0c}^*$ are the adjustment weights that account for the trend.
Obviously, $\bX_c'\mathbf{j}^*_c = \0$, and thus the adjustment weights sum to zero.
Hence, we have proven the following theorem.
\begin{theorem}
Let $\mu_{t+j}$ be generated according to (\ref{eq:ltrt}). The optimal (least squares) real time filter has impulse response function $\{ w_{cj}^{(T)}, j=0, 1, \ldots, sm\}$, with  elements resulting from the sum of $w_{cj}$, given  by (\ref{eq:cn1sided}), and the adjustment weights $\{w_{aj},  j=0, 1, \ldots, sm\}$, given by the element $j+1$  of the vector  $\bw_a$ in (\ref{eq:cn1sidedt}).
\end{theorem}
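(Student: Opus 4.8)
The plan is to recast the concurrent estimation as a single least squares projection in the augmented regression $\by_c = \bX_c\bdelta + \beta_1\mathbf{j}_c + \bpsi$, with $\by_c = (y_t, y_{t-1}, \ldots, y_{t-sm})'$ and $\mathbf{j}_c = (0,1,\ldots,sm)'$, and then to read off the filter from the hat matrix applied at the design point corresponding to time $t$. The target is the fitted climate normal at $j=0$; since the trend contributes $\beta_1\cdot 0 = 0$ there, the estimand is still $\mu_t = \bx_0'\bdelta$. First I would orthogonalize the trend against the trigonometric block by Frisch--Waugh--Lovell, replacing $\mathbf{j}_c$ with $\mathbf{j}_c^* = \bM_c\mathbf{j}_c$, $\bM_c = \bI - \bX_c(\bX_c'\bX_c)^{-1}\bX_c'$. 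This leaves the column span, and hence the fitted values, unchanged, while forcing $\bX_c'\mathbf{j}_c^* = \0$, so the cross-product matrix of $\bZ = (\bX_c, \mathbf{j}_c^*)$ becomes block diagonal.

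In the orthogonalized coordinates the row of $\bZ$ attached to time $t$ is $(\bx_0', j_{0c}^*)$, where $j_{0c}^*$ is the first entry of $\mathbf{j}_c^*$, so the estimate is $\tilde{\mu}_{t|t}^{(T)} = (\bx_0', j_{0c}^*)(\bZ'\bZ)^{-1}\bZ'\by_c$ and the weight vector is $\bw_c^{(T)} = \bZ(\bZ'\bZ)^{-1}(\bx_0', j_{0c}^*)'$. Inserting the block-diagonal inverse factors this into $\bw_c^{(T)} = \bX_c(\bX_c'\bX_c)^{-1}\bx_0 + \mathbf{j}_c^*(\mathbf{j}_c^{*'}\mathbf{j}_c^*)^{-1}j_{0c}^* = \bw_c + \bw_a$, which is exactly (\ref{eq:cn1sidedt}): the first term is the CN filter $\bw_c$ of Corollary 1 with weights $w_{cj}$ from (\ref{eq:cn1sided}), and the second is the adjustment filter $\bw_a$. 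Its $(j+1)$-th entry gives $w_{aj}$, and since the leading column of $\bX_c$ is $\mathbf{1}$ and $\bX_c'\mathbf{j}_c^* = \0$ we get $\mathbf{1}'\mathbf{j}_c^* = 0$, so the adjustment weights sum to zero.

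The step that needs the most care is the bookkeeping of the target functional, namely justifying that the trend block enters through $j_{0c}^*$ rather than through the original zero. The reparametrization shifts the trigonometric coefficients, since $\mathbf{j}_c = \bX_c\ba + \mathbf{j}_c^*$ with $\ba = (\bX_c'\bX_c)^{-1}\bX_c'\mathbf{j}_c$, so $\hat{\bdelta}^\dagger = \hat{\bdelta} + \hat\beta_1\ba$, and the fitted value must be evaluated in the new coordinates. A one-line check using $\bx_0'\ba = (\bX_c\ba)_1 = (\mathbf{j}_c)_1 - (\mathbf{j}_c^*)_1 = -j_{0c}^*$ confirms $\bx_0'\hat{\bdelta}^\dagger + j_{0c}^*\hat\beta_1 = \bx_0'\hat{\bdelta}$, so the two parametrizations agree and the trend genuinely enters via $j_{0c}^*$. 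The same symmetry consideration settles why the two-sided filter is untouched: over the symmetric index set $j = -sm,\ldots,sm$ the odd regressor $\mathbf{j}$ is orthogonal to the even (constant, cosine) columns of $\bX$ and projects only onto the odd sine columns, so $\mathbf{j}^*$ is again odd in $j$ and therefore $j_0^* = 0$; the adjustment term vanishes and Theorem 1 carries over verbatim. Once this functional is pinned down, the remaining algebra is the routine block inversion already displayed.
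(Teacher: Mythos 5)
Your proposal is correct and takes essentially the same route as the paper: orthogonalize the trend regressor against the trigonometric block via $\bM_c$, exploit the resulting block-diagonal cross-product of $(\bX_c,\mathbf{j}_c^*)$, and read off $\bw_c^{(T)}=\bw_c+\bw_a$ exactly as in (\ref{eq:cn1sidedt}). Your explicit bookkeeping that the target row in the orthogonalized coordinates is $(\bx_0',\,j_{0c}^*)$ --- i.e., the check $\bx_0'\hat{\bdelta}^\dagger + j_{0c}^*\hat{\beta}_1 = \bx_0'\hat{\bdelta}$ via $\bx_0'\ba = -j_{0c}^*$ --- and your parity argument giving $j_0^*=0$ in the two-sided case merely make rigorous steps the paper states implicitly, so no substantive difference remains.
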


The adjustment weights are non-increasing in $j$. In conclusion, in the presence of a local linear trend, the arithmetic moving average filter commonly adopted is no longer optimal and the weights ought to be larger for observations closer to the current time.
%%%%%%%%%%%%%%%%%%%%%%%%%%%%%%%%%%%%%%%%%%%%%%%%%%%%%%%%%%%%%%%%%%%%%%%%%%%%%%%%%%%%%%%%%%%%%%%%%%%%%%%%%%%%%%%%%%%%%%%%%%%
\subsection{Regularized Local Trigonometric Filter}

 \label{sec:shrink}

The real time climate normals estimator (\ref{eq:cn1sidedt}) is unbiased if the model is correctly specified, but it suffers from a larger variance, since the leverage associated with the most recent observations is larger, due to the presence of a linear trend.

It can be more efficient to trade some bias for a reduction in the variance by shrinking the filter towards an equally weighted moving average filter, such as $\bw_c$.
The regularization of the filter can be performed  by introducing a scalar shrinkage intensity parameter, $\lambda \in [0,1]$,
to form the filter
\begin{equation}
\bw_r = \bw_c + \lambda \bw_a. \label{eq:regrz}
\end{equation}
The DAF in the presence of  a linear trend arises for $\lambda =1$ and the DAF for the no trend case  for $\lambda = 0$.

Figure \ref{fig:sk} plots the one sided filter weights $\{w_{cj}, j = 0, 1, \ldots, sm\}$ (top left panel), the trend adjustment weights $\{w_{aj}, j = 0, 1, \ldots, sm\}$ (top right) for $m = 10$ ($sm=120$). The bottom plots display the regularized filter weights $\{w_{rj}, j = 0, 1, \ldots, sm\}$ for $\lambda = 0.5$ (bottom left) and $\lambda = 1$ (bottom right).
\begin{figure}
\begin{center}
    \includegraphics[width=1\textwidth]{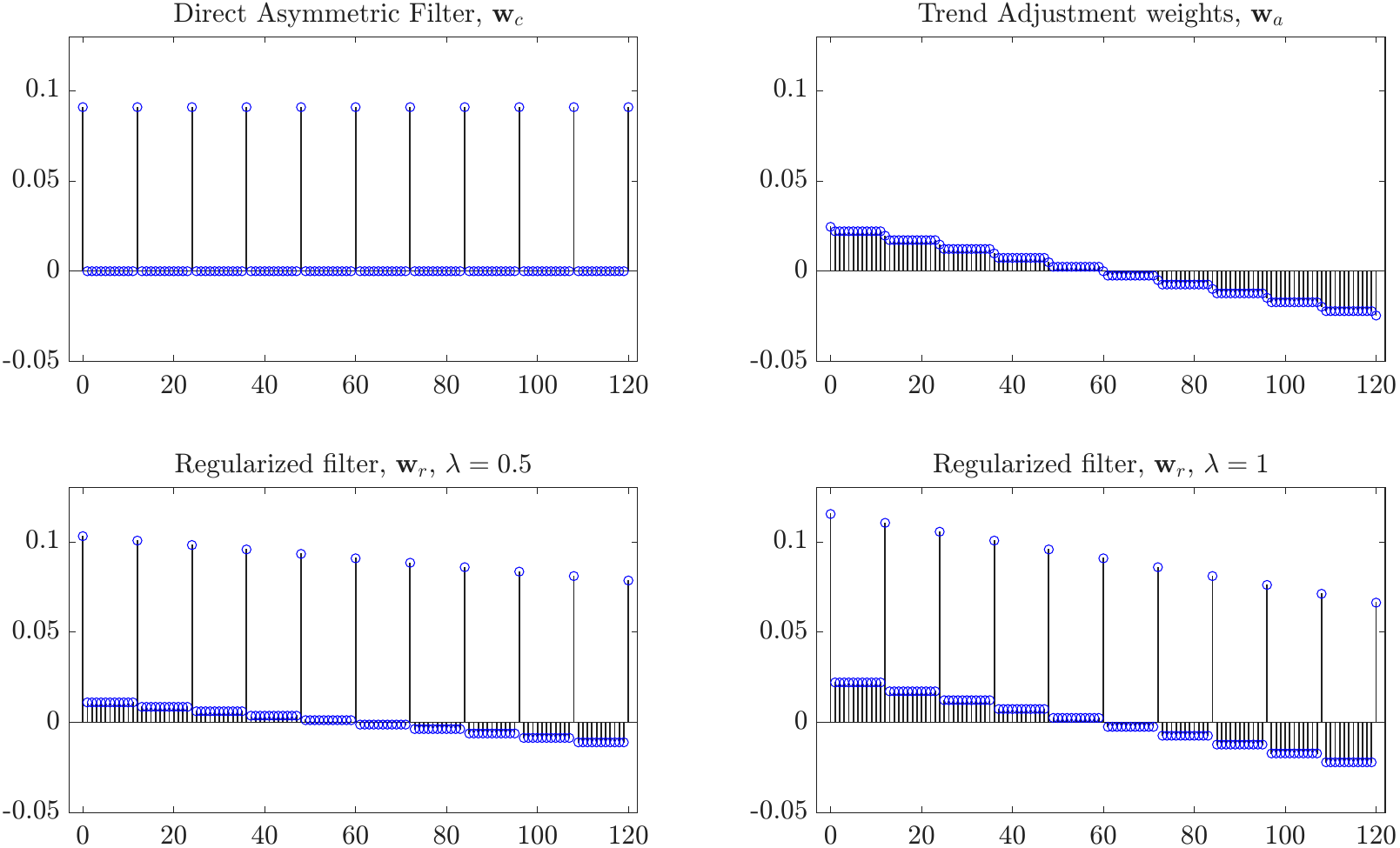}
    \caption{Climate normals one-sided filter weights with bandwidth $m=10$ and $s=12$:
    DAF weights (no trend) $\{w_{cj}, j = 0, 1, \ldots, sm\}$ (top left panel), trend adjustment weights, $\{w_{aj}, j = 0, 1, \ldots, sm\}$ (top right), regularized filter weights $\{w_{rj}, j = 0, 1, \ldots, sm\}$, for $\lambda = 0.5$ (bottom left) and $\lambda = 1$ (bottom right). In the last case the filter weights coincide with $w_{cj}^{(T)}$. \label{fig:sk}}
    \end{center}
\end{figure}

\section{Seasonal Kernels }
\label{sec:kernels}

    The above LTR filters adopt a uniform kernel. In the time series literature the arithmetic moving average filter $w_c(L)$  in (\ref{eq:w0}) is referred to as a  Macaulay filter.

In the local polynomial framework, when $\bX$ is $(2m+1) \times (p+1)$, with generic row $\bx_j' = (1, j , \ldots, j^p)'$, \cite{luati2011equivalence} show that  an  important class of candidate kernels, nesting the \citep{epanechnikov1969non} and the
\citep{henderson1916note} kernels,  arises  when it is assumed that the error is generated by the non-invertible moving average (MA) process of order $d$:
\begin{equation}
\psi_t = (1-L)^d \xi_t, \;\;\; \xi_t \sim \WN(0,\sigma^2).
\label{eq:nima}
\end{equation}
From the interpretative standpoint, the non-invertible MA  process in (\ref{eq:nima}) is the roughest stationary MA($r$) process, since its spectral density has a zero of order  $d$ at the zero frequency and increases monotonically from 0 to the Nyquist frequency. As a consequence, postulating this model amounts to impose a smoothness prior on the signal estimates.

Let us denote by $\bSigma_d$ the covariance matrix of the variables $\{\psi_{t-j}, \ldots, \psi_t, \ldots, \psi_{t+j}\}$  generated according to (\ref{eq:nima}). This is the symmetric $2m+1$-banded Toeplitz matrix, with the  coefficients associated with $L$ in the binomial expansion of $(1-L)^{2d}$,  displayed symmetrically about the diagonal in each row and column.
%whose first row (column) has $j$-th element equal to
%$\sigma_{1j,q} = (-1)^{j}\binom{2q}{q+j},$ for  $j=0,...,q<2h$ and
%zero otherwise.
%This is a band-diagonal matrix of bandwidth $2q+1$ with the binomial coefficients in the
%expansion of $(1-B)^{2q}$ displayed symmetrically about the diagonal in each row and column.
For instance, $\bSigma_1$ has elements $\sigma_{ii} = 2, \sigma_{i,i+1}= \sigma_{i-1,i} = -1,$ and $\sigma_{ij}= 0$ otherwise;
$\bSigma_2$ has elements $\sigma_{ii} = 6, \sigma_{i,i+1}= \sigma_{i-1,i} = -4, \sigma_{i,i+2} = \sigma_{i-2,i} = 1,$ and $\sigma_{ij}= 0$ otherwise.

The optimal kernel of order $d$, $\bkappa_d=(\kappa_{d,-m}, \ldots, \kappa_{d0}, \ldots, \kappa_{dm})'$, for the local polynomial regression model $\by = \bX\bbeta + \bpsi$, is such that the weighted least squares estimator of $\bbeta$,  $\hat{\bbeta} = (\bX'\bK_d\bX)^{-1}\bX'\bK_d\by$, $\bK_d = \diag(\kappa_{d,-m}, \ldots, \kappa_{d0}, \ldots, \kappa_{d,m})$, is equivalent to the GLS estimator, $\hat{\bbeta}_{GLS} = (\bX'\bSigma_d^{-1}\bX)^{-1}\bX'\bSigma_d\by$.

\cite{luati2011equivalence} show that the   kernel obtained from the row sums of the error precision matrix, $\bkappa_d = \bSigma_d^{-1}\bi$, enjoys this property if $p\leq d$, and that the kernel weights lie on a nonnegative polynomial of order $2d$:
\begin{equation}\kappa_{dj} \propto [(m+1)^2-j^2][(m+2)^2-j^2] \dots
[(m+d)^2-j^2],\label{eq:kq}\end{equation} for $j=-m,...,m.$
When $d=1$, $\psi_t=(1-L)\xi_t$, and $\bkappa_1$ is the Epanechnikov kernel, with elements $\kappa_{1,j} \propto
[(h+1)^2-j^2]$; the Henderson kernel, which can be seen as a discrete version of the triweight kernel, arises for $d=3$.  The uniform kernel arises for $d=0$, while  $d=2$ yields a discrete version of the biweight kernel.

More generally, it was proven in \cite{luati2011equivalence} that for arbitrary design matrix $\bX$ and covariance $\bSigma = \E(\bpsi\bpsi')$,
the optimal kernel that yields  the best  linear unbiased predictor of  $\by$ given $\bX$,   % If theorem 1 holds, then an explicit expression for the optimal $\bK$
is obtained as the solution of the homogeneous system  $\bL\bkappa = \0,$ where
\begin{equation}
   \;\;\;  \bL = \left(\bX'\otimes \left(\bSigma-\bX(\bX'\bSigma^{-1}\bX)^{-1}\bX'\right)\right)\bZ',
\label{eq:optK}
\end{equation}
and $\bZ$ the selection  matrix, such that $\vec(\bK)=\bZ' \bkappa$, and $\bZ\bZ' = \bI$.

This approach generalizes to the local trigonometric regression case, where $\bX$ is given in Sections \ref{ssec:ltr} and \ref{sec:ltrtrend}, as follows. The smoothness prior is now    $\psi_t = (1-L^s)^d \xi_t$, where $d>0$ and $\xi_t\sim\WN(0,\sigma^2)$; this implies that the spectrum of $\psi_t$ features zeros of order $d$ at the seasonal frequencies and at the long run frequency, i.e., all the variability at those frequencies is attributed to the climate normals. If $\bSigma_{d,s}$ denotes the autocovariance matrix of the process $\psi_t$, $\bSigma_{d,s} = \bSigma_d \otimes \be_1  \be_1'$, where $\be_1$ is an $s\times 1$ vector with 1 in first position and zero elsewhere, the corresponding kernel is $\bkappa = \bSigma_{d,s} \bi$.

The optimality follows from the fact that the null space of $\bL$ has dimension $s$ and that the kernel  $\bkappa = \bSigma_{d,s} \bi$ is a linear combination of the $s$ singular vectors spanning this space with $\bkappa\geq 0$.

Let $\bK= \diag(\kappa_{-sm}, \ldots, \kappa_0, \ldots, \kappa_{sm})$ be partitioned as
$\bK = \diag(\bK_f, \kappa_0, \bK_p),$ where $\bK_p =\diag(\kappa_1, \ldots, \kappa_{sm}),$ $\bK_f = \bE\bK_p$ and define  $\bK_c =\diag(\kappa_{0}, \bK_p).$
Then, the optimal two-sided filter is $\bw_q = \bK \bX (\bX'\bK\bX)^{-1}\bx_0$ and the DAF is $\bw_c = \bK_c \bX_c (\bX_c'\bK_c\bX_c)^{-1}\bx_0$.

Figure \ref{fig:sk} displays the seasonal Epanechnikov and Henderson kernel weights $\kappa_j$ for a half bandwidth of ten years of monthly observations, $m=10$, along with the corresponding optimal two-sided climate normals filters $\{w_j\}$  (bottom plots) for the trigonometric model (\ref{eq:ltr}) without a linear trend.
\begin{figure}
\begin{center}
    \includegraphics[width=1\textwidth]{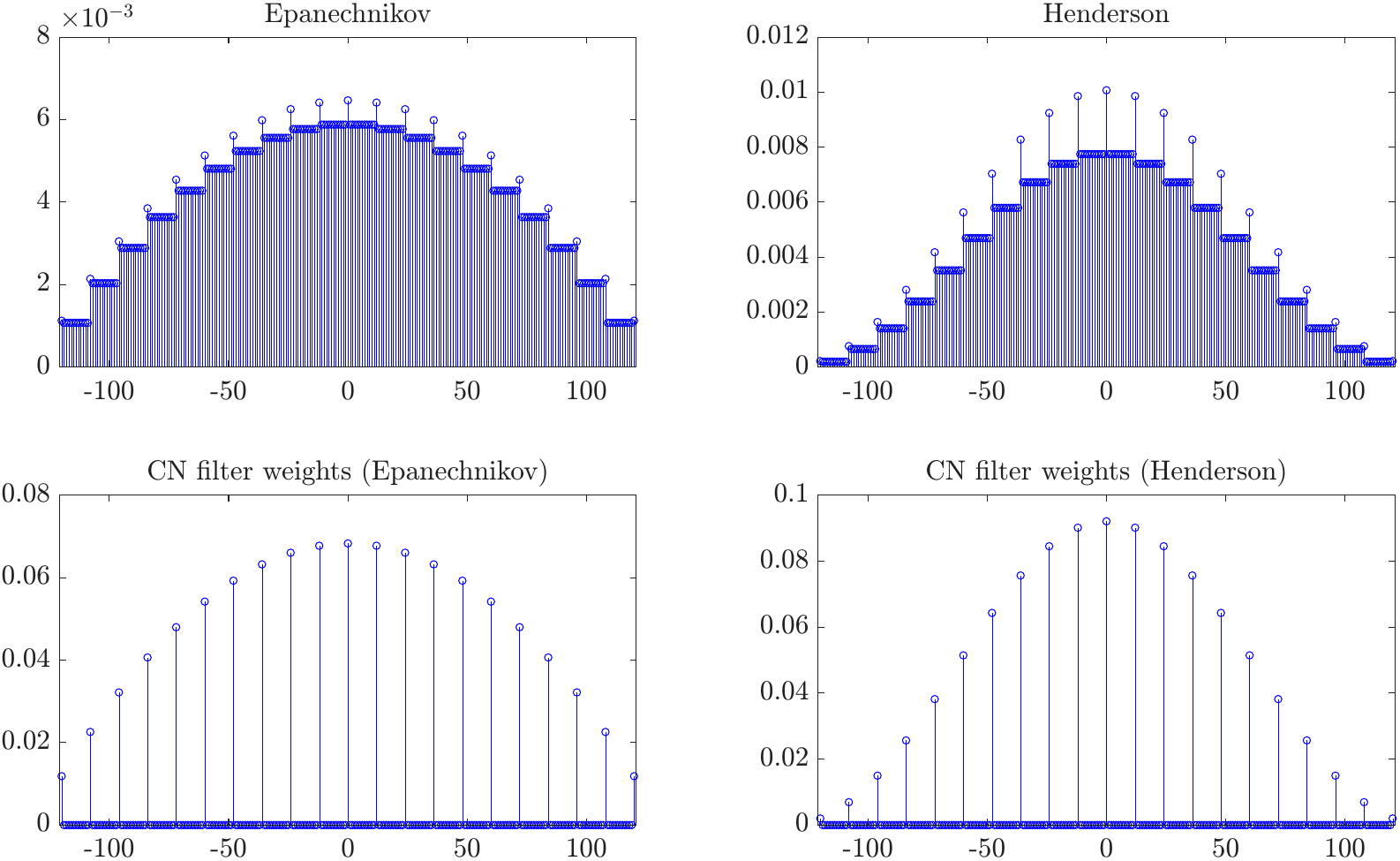}
    \caption{Epanechnikov and Henderson seasonal kernel weights, $m=10$, and optimal two-sided climate normal (CN) filter weights $\bw_q = \bK \bX (\bX'\bK\bX)^{-1}\bx_0$  (bottom plots) for the trigonometric model (\ref{eq:ltr}) without a linear trend. \label{fig:sk}}
    \end{center}
\end{figure}

For the trigonometric regression model with a linear trend (\ref{eq:ltrt}), the direct asymmetric filter is
$\bw_c^{(T)} = \bw_c+\bw_a$ with $\bw_a = \bK_c\bu_c(\mathbf{j}_c^{*'}\bK_c\mathbf{j}_c^*)^{-1}j_{0c}^*$. The regularized filter, $\bw_r$, is again constructed as in (\ref{eq:regrz}).
Figure \ref{fig:skf} displays the weights of the DAF filter for the no-trend case, $\{w_{cj}, j = 0,\ldots sm\}$ (top left), the trend adjustment weights  $\{w_{aj}, j = 0,\ldots sm\}$ (top right). The kernel is Epanechnikov's and $m=10$. The bottom plots show the weights of the regularized filter $\{w_{rj} = w_{cj}+\lambda w_{aj}, j = 0,\ldots sm\}$ for $\lambda=0.5$ and $\lambda =1$. The latter correspond to the DAF weights for the linear trend case.

\begin{figure}[h]
\begin{center}
    \includegraphics[width=1\textwidth]{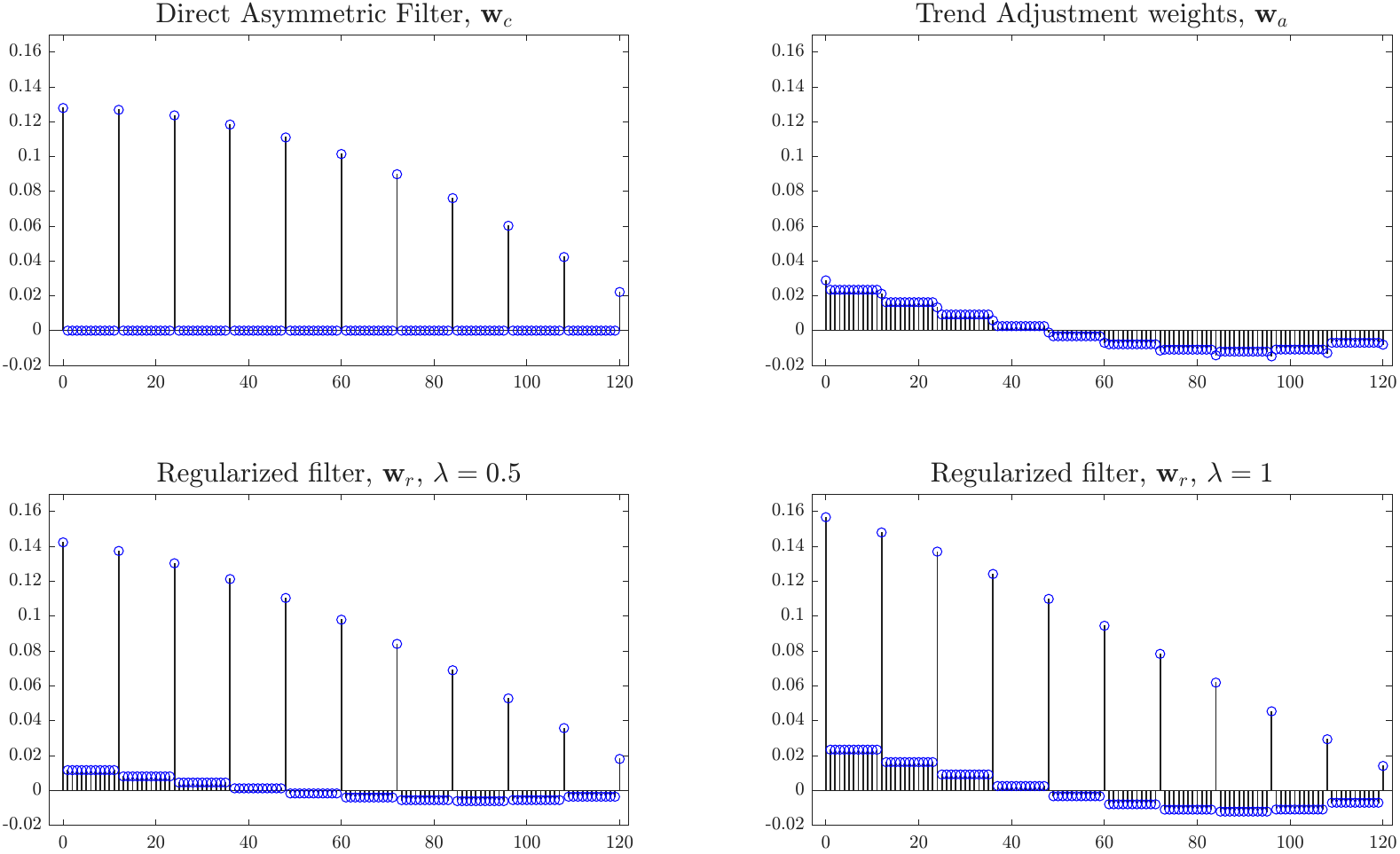}
    \caption{Climate normals one-sided filters using an Epanechnikov kernel with $m=10$: Weights of the DAF filter for the no-trend case, $\{w_{cj}, j = 0,\ldots sm\}$ (top left), the trend adjustment weights  $\{w_{aj}, j = 0,\ldots sm\}$ (top right).  weights of the regularized filter $\{w_{rj} = w_{cj}+\lambda w_{aj}, j = 0,\ldots sm\}$ for $\lambda=0.5$ and $\lambda =1$ (bottom plots). \label{fig:skf}}
    \end{center}
\end{figure}

\section{Selection of Bandwidth and Shrinkage intensity }
\label{sec:mse}

Given the choice of the kernel, the regularized climate normals estimator,
\begin{equation}\label{eq:cnest}
\tilde{\mu}_{t|t}^{(R)} = \sum_{j=0}^{sm} w_{rj}  y_{t-j}, \;\;\; w_{rj} = w_{cj} + \lambda w_{aj},
\end{equation}
depends on the choice of  $\lambda$ (shrinkage parameter) and $m$ (bandwidth in years). The optimality of the one-sided signal extraction filters for the LTR model (\ref{eq:ltrt}) are derived under the white noise or local stationarity assumptions for $\psi_t$. % or under the smoothness prior $\psi_t = (1-L^s)^r \xi_t$, $\xi_t\sim\WN(0,\sigma^2)$ (Epanechnikov or Henderson).

In the presence of a strong serially dependent  interannual component we need to prevent that the climate normals estimates absorb part or all the cyclical evolution of the interannual component. In the nonparametric literature, this is achieved by a suitable choice of the bandwidth, rather than through a modification of the signal extraction filters. The literature has investigated the optimal bandwidth choice in the in-fill asymptotics framework.  See  \cite{beran2002local} and \cite{beran2013long}, sec. 7.4, for overview of results. Time series cross-validation has been proposed, \cite{hart1991kernel}, see also \citet[Section 6.2]{fan2008nonlinear}.  % These approaches are based on the so called in-fill asymptotics, such that, as $n$ increases, the bandwidth increases at a slower rate.
The climate literature has adopted time series cross-validation methods that ignore the interannual serial correlation, in that the choice of $m$
is based  on ability to predict the series by means of a misspecified model, such as (\ref{eq:ltrt}), assuming $\psi_t \sim \WN(0,\sigma^2)$.%, or by selecting $m$ and as the minimizer of the leave-one-out crossvalidation score, $$CV(\lambda, m) = \frac{\sum_t(y_t-\tilde{\mu}_{t|t})^2}{\left(1-w_{c0}^{(T)}\right)^2}.$$
%
%Methods that ignore the interannual variability (serial correlation in the anomalies) and assume $\psi_t\sim \WN(0,\sigma^2)$.
%The estimates of the climate normals and of the anomalies depend primarily on the choice of the pair $(m, \lambda)$.
%
%$\lambda$ is chosen as the minimizer of  the leave-one-out crossvalidation score
%$$CV(\lambda, m) = \frac{\sum_t(y_t-\tilde{\mu}_{t|t})^2}{(1-w_{0r})^2}.$$

%Time series cross-validation \cite{hart1991kernel}, see also \cite[Section 6.2]{fan2008nonlinear}
%Choice of bandwidth \cite{beran2002local} and \cite[Section 7.4]{beran2013long} for overview of results.
%These approaches are based on the so called in-fill asymptotics, such that, as $n$ increses the bandwidth increases at a slower rate.

We select the parameters that drive the properties of the CN filter, $(\lambda,m)$,  as the minimizers of the mean square error (MSE) of the CN estimator (\ref{eq:cnest}).
Denoting $\bmu = (\mu_t, \mu_{t-1}, \ldots, \mu_{t-sm})'$,  $\bpsi = (\psi_t, \psi_{t-1}, \ldots, \psi_{t-sm})'$, and recalling $\mu_t = \bx_0'\bdelta + \beta_1 j^*_{0c}$,
$\bmu = \bX_c\bdelta + \beta_1 \mathbf{j}^*$, $\bw_c'\bX_c = \bx_0'$, $\bw_a' \mathbf{j}^*_c = j_{0c}^*$,
the estimation error of $\tilde{\mu}_{t|t}^{(R)}$ is
$$\begin{array}{lll}
\tilde{\mu}_{t|t}^{(R)} - \mu_t & =& \sum_{j=0}^{sm} w_{rj} y_{t-j} - \mu_t \\
 & = & \sum_{j=0}^{sm} w_{rj} \mu_{t-j} - \mu_t + \sum_{j=0}^{sm} w_{rj} \psi_{t-j} \\
 & = & (\bw_r+\lambda \bw_a)'(\bX_c\bdelta +\beta_1 \mathbf{j}_c^*)-\bx_0'\bdelta - j_{0c}^*+\bw_r'\bpsi \\
 & = & (\lambda - 1)\beta_1 j_{0c}^*+ \bw_r'\bpsi,
 \end{array}$$
where the first addend contributes to the bias and the second to the variance.
Hence
\begin{equation}
\mbox{MSE}(\tilde{\mu}_{t|t}) = (\lambda-1)^2 \beta_1^2 j^{2*}_{0c} + \bw_r'\bGamma_{\psi}\bw_r, \label{eq:mse}
\end{equation}
where $\bGamma_{\psi}$ is the autocovariance matrix of $\bpsi$.

Both $\lambda$ and $m$ drive the bias-variance trade-off: having $0\leq \lambda < 1$ contributes the bias, but if $\lambda$ decreases, so does the variance, since the additional term $\lambda \bw_a'\bGamma_\psi\bw_a$ will make a smaller contribution.
As for $m$, while the variance is typically decreasing with $m$, the scalar $j^{2*}_{0c}$ increases monotonically with it.

%Prediction
%$$\tilde{y}_{t+h|t} = \bx_{t+l}' \hat{\bdelta}_c + \lambda \hat{\beta}_c j_{t+h} + \tilde{\psi}_{t+h|t},$$
%where $\bx_{t+h} = \left(1, \cos(\frac{2\pi}{6}h),\sin(\frac{2\pi}{6}h), \ldots, \cos(\pi h)  \right)'$,
%$ \hat{\bdelta}_c = (\bX_c'\bX_c)^{-1}\bX_c \by_t$, $\by_t = (y_t, y_{t-1}, \ldots, y_{t-12m})$,
%%\section{Cascading (Backfitting)}
%
%
%Alternatively, we can minimize the $h$-step ahead total square prediction error,
%$$\sum_t (y_{t+l} - \tilde{y}_{t+h|t})^2,$$
% where
%$$\tilde{y}_{t+h|t} = \bx_{t+l}' \hat{\bdelta}_c + \lambda \hat{\beta}_{1} j_{t+h} + \tilde{\psi}_{t+h|t},$$
%and
% $\bx_{t+h} = \left(1, \cos(\frac{2\pi}{6}h),\sin(\frac{2\pi}{6}h), \ldots, \cos(\pi h)  \right)'$,
%$ \hat{\bdelta}_c   = (\hat{\beta}_0, \hat{\gamma}_1, \hat{\gamma}_1^*, \ldots, \hat{\gamma}_6)$,
%   $\tilde{\psi}_{t+h|t}$ is obtained by fitting an AR($p$) model to the real time anomalies.
%%The components could require a different bandwidth \cite{proietti2011low}. Consider the relation of bandwidth and cut-off frequency.
%%%%%%%%%%%%%%%%%%%%%%%%%%%%%%%%%%%%%%%%%%%%%%%%%%%%%%%%%%%%%%%%%%%%%%%%%%%%%%%%%%%%%%%%%%%%%%%%%%%

\section{Estimation of Climate Normals and Anomalies for ENSO Characterization. }

 \label{sec:data}

\subsection{Description of the dataset}

We consider the estimation of the climate normals and anomalies for a set of 348 monthly time series available over a reference spatial grid, that are relevant for characterising the ENSO phenomenon. The time series are subdivided into two main groups. The first  relates to the oceanic component and consists of 120 series of Sea Surface Temperatures (SST)  in  the Ni\~{n}o 3.4 regions in the tropical pacific, while the second refers  to the atmospheric component of ENSO, namely zonal equatorial winds, and trade winds  (permanent east-to-west prevailing winds, also referred to as  `easterlies') located at different at different longitudes (West Pacific, Central Pacific, East Pacific).
Table \ref{tab:series} provides a summary description of the series considered in our illustrations.

The SST time series are extracted from the Extended Reconstructed Sea Surface Temperature (ERSST) dataset \citep{huang2017extended}, which contains global monthly sea surface temperature (SST) time series from 1854 to the present on a $2.5^{\circ}\times 2.5^{\circ}$ global grid. The dataset is available from the official NOAA repository at the website
\href{https://downloads.psl.noaa.gov/Datasets/noaa.ersst.v5/sst.mnmean.nc}{https://downloads.psl.noaa.gov}.
The series  are constructed from observations in the {International Comprehensive Ocean-Atmosphere Data Set (ICOADS)} \citep{freeman2017icoads}, combining ship-based measurements, buoy data, and Argo float records, as well as other sources depending on the area, see \cite{rayner2003global} for details.

\renewcommand{\baselinestretch}{1}
\begin{table}[hb]
\begin{center}
{\small
\begin{tabular}{lrlcc}
Series & $N$ & Region & Latitude & Longitude \\ \hline
    Sea Surface Temperatures & 120 & Ni\~{n}o 3.4 region & 5$^{\circ}$North-5$^{\circ}$South & 170$^{\circ}$West-120$^{\circ}$West \\
     ZWEq: 200 millibar (mb) Zonal Winds  & 23 & Equator & 0 &  165$^{\circ}$West-110$^{\circ}$West\\
     WPTWI: 850 mb Trade Wind Index & 95 & West Pacific & 5$^{\circ}$North-5$^{\circ}$ South & 135$^{\circ}$East-180$^{\circ}$West  \\
     CPTWI: 850 mb Trade Wind Index & 75 & Central Pacific & 5$^{\circ}$North-5$^{\circ}$ South & 175$^{\circ}$West-140$^{\circ}$West \\
     EPTWI: 850 mb Trade Wind Index & 35  & East Pacific    & 5$^{\circ}$North-5$^{\circ}$ South  & 135$^{\circ}$West-120$^{\circ}$West  \\\hline
\end{tabular}}
\end{center}
\caption{Description of the dataset. Distribution of the 348 monthly time series by group. The SST are available from January 1854 to December 2024. The winds series (meters per second (m/s)) are available from January 1948 to December 2024.  \label{tab:series}}
\end{table}
\renewcommand{\baselinestretch}{1}

The wind time series concern the wind velocity in the east-west direction, with positive anomalies indicating westerly winds   and negative anomalies measuring the strength of easterly winds. They are extracted   from the \textit{NCEP-NCAR Reanalysis 1} dataset,    documented in \cite{kalnay1996ncep} and downloadable at official NOAA repository \href{https://downloads.psl.noaa.gov/Datasets/ncep.reanalysis/Monthlies/pressure/uwnd.mon.mean.nc}{https://downloads.psl.noaa.gov}, which is provided at the monthly frequency starting  from January 1948 to the  present. The data are distributed on a global grid with standardized spatial resolution and interpolated into pressure levels, providing atmospheric dynamics at different altitudes.    This reanalysis provides a consistent global representation of atmospheric circulation and is based on a data assimilation system that incorporates observations from radiosondes, buoys, satellites, and numerical weather models.

\subsection{Preliminary analysis: stability of climate normals}

We preliminarily test the stability of the level and seasonal pattern (i.e., the components of the climate normals) of the series by means of the \cite{canova1995seasonal} statistic, modified by \cite{busetti2003seasonality}.  The null hypothesis is that seasonality and level are deterministic versus the alternative that they evolve according to a nonstationary stochastic process.

Let $e_t$ denote the ordinary least squares residual from regressing $y_t$ on an intercept  and 6 trigonometric components defined at the seasonal frequencies. The  statistic
$$\omega_j = \frac{1}{n^2\hat{\sigma}^2_L}\sum_{t=1}^n \left\{\left(\sum_{i=1}^t e_i \cos \varphi_j i\right)^2 +  \left(\sum_{i=1}^t e_i \sin \varphi_j i\right)^2 \right\},$$
provides a test for the null that the seasonal cycle at the frequency $\varphi_j = \pi j/6, j = 1,\ldots, 5,$ is deterministic,
%$$\omega_j = \frac{1}{n^2\hat{\sigma}^2_L}\sum_{t=1}^n \left\{\left(\sum_{i=1}^t e_i \cos \varphi_j i\right)^2 +  \left(\sum_{i=1}^t e_i \sin \varphi_j i\right)^2 \right\},$$
where $\hat{\sigma}^2_L$ is a nonparametric estimate of the long run variance of $e_t$, for which we adopted a Bartlett window with bandwidth $M=12$ lags\footnote{Denoting the sample autocovariance of $e_t$ at lag $k$ by $\hat{\gamma}(k)$, $\hat{\sigma}^2_L = \hat{\gamma}(0) + 2 \sum_{k=1}^M \left(1-k/(M+1)\right) \hat{\gamma}(k)$. Above we set $M=12$.}.
The null distribution of the test is Cram\'{e}r-von Mises with 2 degrees of freedom; the $5\%$ critical value is 0.749 \cite[see][Table 1]{canova1995seasonal}.
For the zero (long-run) frequency and the $\pi$ frequency the test statistics are respectively
\begin{equation}
\omega_0 = \frac{1}{n^2\hat{\sigma}^2_L}\sum_{t=1}^n \left(\sum_{i=1}^t e_i \right)^2,  \;\;\;\omega_6 = \frac{1}{n^2\hat{\sigma}^2_L}\sum_{t=1}^n \left(\sum_{i=1}^t e_i \cos\pi i\right)^2. \label{eq:ch}
\end{equation}
Their 5\% critical value is 0.470. If a linear trend is included, then the distribution of $\omega_0$ is a level 2  Cram\'{e}r-von Mises with 1 degree  of freedom  and the critical value is 0.146.

Table \ref{tab:bhts} displays the deciles of the empirical distribution of the Busetti-Harvey statistics $\omega_j$, $j=0, \ldots, 6$, computed over the 120 SST series.
Stability at the long-run and seasonal frequencies is not rejected in almost all the cases.
\renewcommand{\baselinestretch}{1}
\begin{table}[hb]
\centering
\begin{tabular}{lccccccc}
 	&	$\omega_0$	&	$\omega_1$	&	$\omega_2$	&	$\omega_3$	&	$\omega_4$	&	$\omega_5$	&	$\omega_6$	\\\cline{2-8}
1st decile	&	0.039	&	0.017	&	0.004	&	0.004	&	0.001	&	0.001	&	0.002	\\
2nd decile	&	0.046	&	0.020	&	0.006	&	0.006	&	0.001	&	0.001	&	0.002	\\
3rd decile	&	0.051	&	0.022	&	0.007	&	0.006	&	0.001	&	0.002	&	0.003	\\
4th decile	&	0.056	&	0.024	&	0.007	&	0.007	&	0.001	&	0.002	&	0.004	\\
5th decile	&	0.060	&	0.025	&	0.010	&	0.008	&	0.001	&	0.002	&	0.004	\\
6th decile 	&	0.067	&	0.026	&	0.013	&	0.008	&	0.002	&	0.002	&	0.004	\\
7th decile	&	0.074	&	0.028	&	0.016	&	0.009	&	0.002	&	0.003	&	0.004	\\
8th decile	&	0.081	&	0.031	&	0.018	&	0.009	&	0.002	&	0.003	&	0.005	\\
9th decile	&	0.096	&	0.035	&	0.022	&	0.009	&	0.003	&	0.003	&	0.005	\\\cline{2-8}
\end{tabular}
\caption{\label{tab:bhts} Deciles of the distribution of the Busetti-Harvey stationarity test in the presence of trend over
$N=120$ Sea Surface Temperatures series in the Nino  3-4 region. For $j = 1,\ldots,5,$ the $5\%$ critical value  is 0.749; for $\omega_0$ it amounts to  0.146, while for $\omega_6$ it is equal to 0.470.
 }
\end{table}
\renewcommand{\baselinestretch}{1}

Very different conclusions are reached for the zonal and trade winds series.
Table \ref{tab:bhtw}, which  displays the deciles of the empirical distribution of the statistics $\omega_j$, $j=0, \ldots, 6$, computed over the 205 trade winds series, illustrates that trend stationarity is rejected in more than 90\% of the series. As far as seasonality is concerned, stationarity is rejected in at least 10\% of the cases only for the component at fundamental frequency ($\omega_1$, corresponding to one cycle per year), while it is not rejected at the harmonic frequencies.
\renewcommand{\baselinestretch}{1}
\begin{table}[hb]
\centering
\begin{tabular}{lccccccc}
        	&	$\omega_0$	&	$\omega_1$	&	$\omega_2$	&	$\omega_3$	&	$\omega_4$	&	$\omega_5$	&	$\omega_6$	\\ \cline{2-8}
1st decile	&	0.113	&	0.081	&	0.013	&	0.018	&	0.009	&	0.009	&	0.002	\\
2nd decile	&	0.165	&	0.130	&	0.024	&	0.026	&	0.015	&	0.012	&	0.003	\\
3rd decile	&	0.196	&	0.170	&	0.033	&	0.031	&	0.017	&	0.015	&	0.004	\\
4th decile	&	0.223	&	0.198	&	0.041	&	0.039	&	0.019	&	0.016	&	0.005	\\
5th decile	&	0.270	&	0.251	&	0.053	&	0.044	&	0.021	&	0.019	&	0.007	\\
6th decile 	&	0.315	&	0.315	&	0.065	&	0.051	&	0.023	&	0.021	&	0.008	\\
7th decile	&	0.331	&	0.410	&	0.075	&	0.059	&	0.029	&	0.023	&	0.013	\\
8th decile	&	0.358	&	0.576	&	0.109	&	0.075	&	0.034	&	0.026	&	0.019	\\
9th decile	&	0.413	&	0.931	&	0.192	&	0.109	&	0.047	&	0.031	&	0.027	\\\cline{2-8}
\end{tabular}
\caption{\label{tab:bhtw} Deciles of the distribution of the Busetti-Harvey stationarity test in the presence of trend over
$N=205$ trade winds series. For $j = 1,\ldots,5,$ the $5\%$ critical value  is 0.749; for $\omega_0$ it amounts to  0.146, while for $\omega_6$ it is equal to 0.470.
 }
\end{table}
\renewcommand{\baselinestretch}{1}

\subsection{SST and Winds Climate Normals and Anomalies}

Estimation of the climate normals and anomalies was conducted by choosing the pair of values $(\hat{\lambda}, \hat{m})$ that minimize the mean square estimation error, given in (\ref{eq:mse}). A grid search was conducted over the values of $\lambda$ from 0 to 1 with step 0.1 and considering all the values of $m$ from 6 to 30 with step 1. The regularized filter weights, $w_{rj} = w_{cj}+\lambda w_{aj}, j = 0, 1,  \ldots, sm,$  were obtained as described in Section \ref{sec:kernels}, using the one-sided seasonal Epanechnikov kernel. The adoption of a uniform kernel yields similar results, but the MSE is not a smooth function of the shrinkage and bandwidth parameters, and thus the estimates of $\lambda$ and $m$ tend to be  more erratic.

For the  SST series the average of the estimated values of $\lambda$ and $m$ resulted equal to 0.10 and 29.38, with standard deviation  0.07 and 0.93, respectively.
Figure \ref{fig:ssta} compares the cross-sectional average of the 120 estimated anomaly series across the Nin\~{o} 3.4 region
with the area-averaged anomaly series computed by the Climate Prediction Center (CPC) at NOAA, applying the 30-year filter \ref{eq:w0}, (reference period January 1991-December 2020, available at \href{https://www.cpc.ncep.noaa.gov/data/indices/}{www.cpc.ncep.noaa.gov}, series \texttt{detrend.nino34.ascii.txt}), plotted in Figure \ref{fig:sstnoaa}.  The differences are seasonal and seasonally varying, but their sizes do not exceed $0.3$ degrees Celsius. According to our analysis, the NOAA SST climate normals, already plotted in the upper panel of  Figure
 \ref{fig:sstnoaa}, when compared with the cross-sectional average of those estimated by the regularized LPR filter,
appear  adequate; in fact, as it was also pointed out by the seasonal stability tests, the evidence for varying trend and seasonal is rather weak.

%%%%%%%%%%%%%%%%%%%%%%%%%%%%%%%%%%%%%%%%%%%%%%%%%%%%%%%%%%%%%%%%%%%%%%%%%%%%%
\begin{figure}
\begin{center}
    \includegraphics[width=\textwidth]{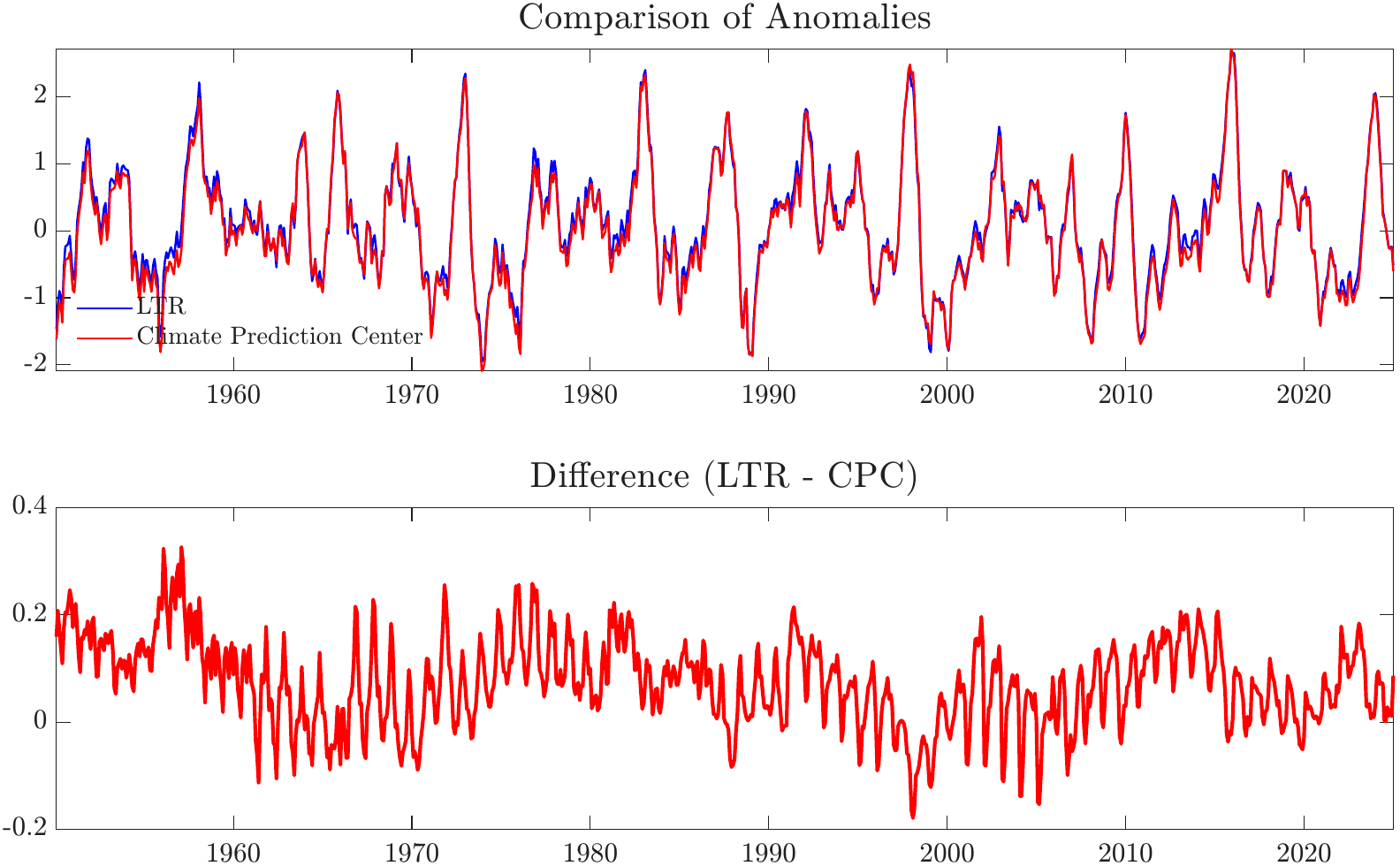}
     \end{center}
\caption{Comparison of SST anomalies: average of the 120 SST anomalies in the Nin\~{o} 3.4 region, estimated using the regularized local trigonometric regression (LTR) approach, versus the Climate Prediction Center (CPC) estimates. \label{fig:ssta}}
\end{figure}
%\begin{figure}
% \begin{center}
%    \includegraphics[width=.8\textwidth]{}
%     \end{center}
%\caption{Comparison of SST climate normals: average of the 120 SST climate normals in the Nin\~{o} 3.4 region, estimated using the regularized local trigonometric regression (LTR) approach, versus the Climate Prediction Center (CPC) estimates. \label{fig:sstcn}}
%\end{figure}

The evidence is strikingly different for the 228 time series referring to zonal and trade winds. Table \ref{tab:winds} tabulates the joint distribution of the values of $(\hat{\lambda}, \hat{m})$ that minimize the mean square estimation error. The 30-year arithmetic average filter (\ref{eq:w1}), corresponding to $\lambda = 0$ and $m=30$  is selected only in 3 cases; the average of the estimates   are  equal to 0.49 and 23.86, with standard deviation  0.31 and 6.95, respectively for $\hat{\lambda}$ and $\hat{m}$.
\renewcommand{\baselinestretch}{1}
 \begin{table}
\centering
\small{\begin{tabular}{r|rrrrr|r}
   & \multicolumn{5}{c}{Values of $\hat{\lambda}$}\\
Values of $\hat{m}$ 	&	$0\dashv 0.2$	&	$0.2\dashv 0.4$	&	$0.4\dashv 0.6$	&	$0.6\dashv 0.8$	&	$0.8\dashv 1.0$	&	Total 	\\ \hline
$6-10$	&	0	&	2	&	12	&	0	&	0	&	14	\\
$11-15$	&	10	&	11	&	2	&	1	&	0	&	24	\\
$16-20$	&	13	&	16	&	1	&	1	&	0	&	31	\\
$21-25$	&	1	&	11	&	3	&	8	&	3	&	26	\\
$26-30$	&	37	&	13	&	25	&	19	&	39	&	133	\\\hline
Total	&	61	&	53	&	43	&	29	&	42	&	228	\\
\end{tabular}}
\caption{\label{tab:winds} Joint distribution of the values $(\hat{\lambda}, \hat{m})$ maximizing the mean square estimation error (\ref{eq:mse}). The counts  represent the number of time series for which the combination of  values $(\hat{\lambda}, \hat{m})$ in the corresponding ranges was selected.}
\end{table}
\renewcommand{\baselinestretch}{1}

Figure \ref{fig:twwa} compares the cross-sectional average of the $N=95$ West Pacific Trade Wind anomalies estimated by the regularized LTR filter with the   aggregate series \texttt{wpac850} published by NOAA's CPC at \href{https://www.cpc.ncep.noaa.gov/data/indices/}{www.cpc.ncep.noaa.gov}. The nonstationarity of normal climatological conditions, leaks into the anomalies. In particular, the 30-year filter tends to underestimate the climate normals, as it can be seen from Figure \ref{fig:twwcn}, and consequently to overestimate the anomalies. The difference in the anomalies tends to be negative and its size reaches up to 2 (m/s). As a result of unaccounted long-run variation, part of the persistence in the serial dependence of wind anomalies is due to the inability of the climate normals filter. The excess persistence of the wind anomalies is likely to affect subsequent joint modelling methods of SST and trade Winds by multivariate time series approaches, such as vector autoregressive models and principal components analysis.

\begin{figure}[h]
 \begin{center}
    \includegraphics[width=1\textwidth]{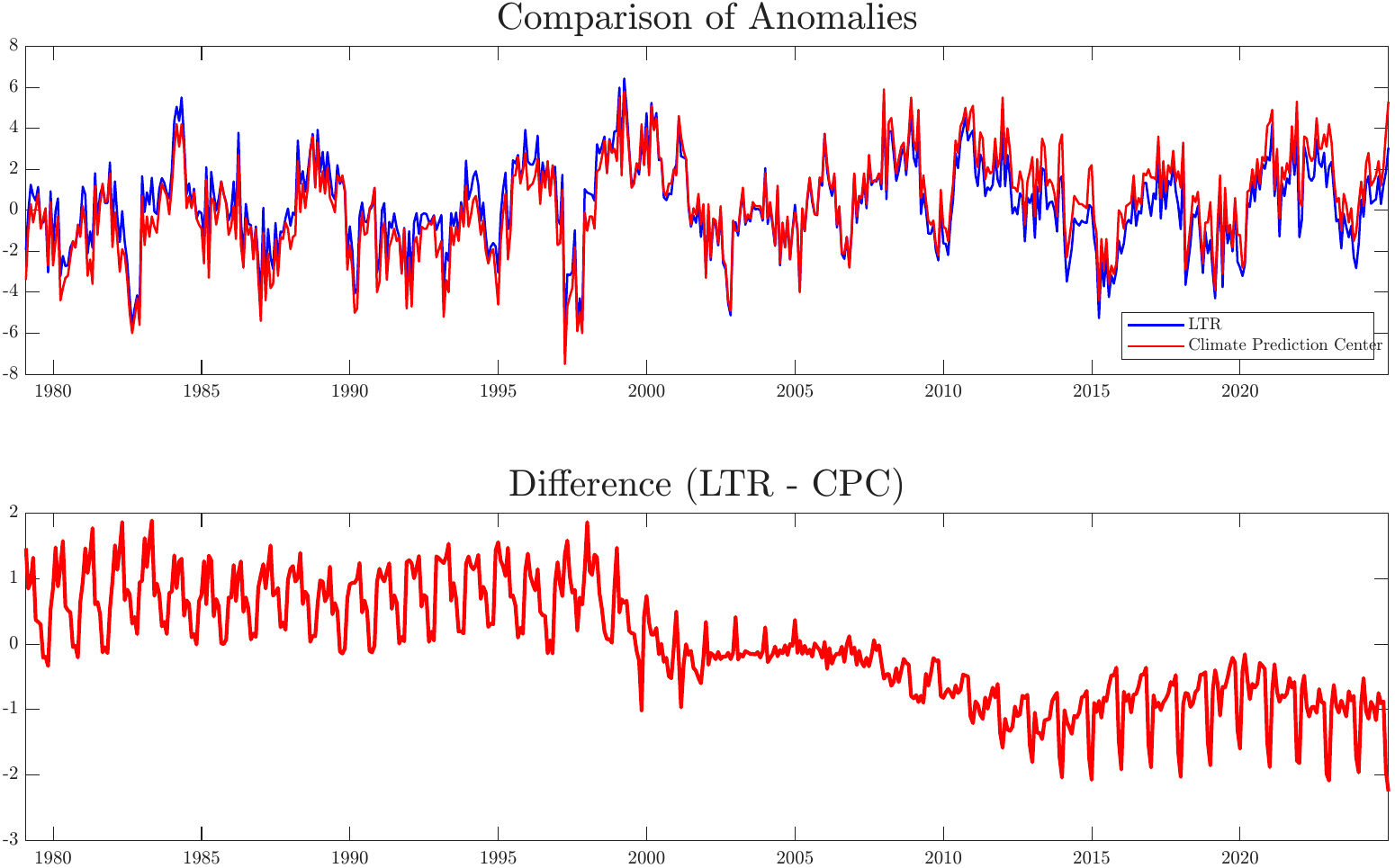}
     \end{center}
\caption{Comparison of  Trade Winds West Pacific anomalies: average of the 120 SST anomalies in the Nin\~{o} 3.4 region, estimated using the regularized local trigonometric regression (LTR) approach, versus Climate Prediction Center (CPC) estimates. \label{fig:twwa}}
\end{figure}

\begin{figure}
 \begin{center}
    \includegraphics[width=1\textwidth]{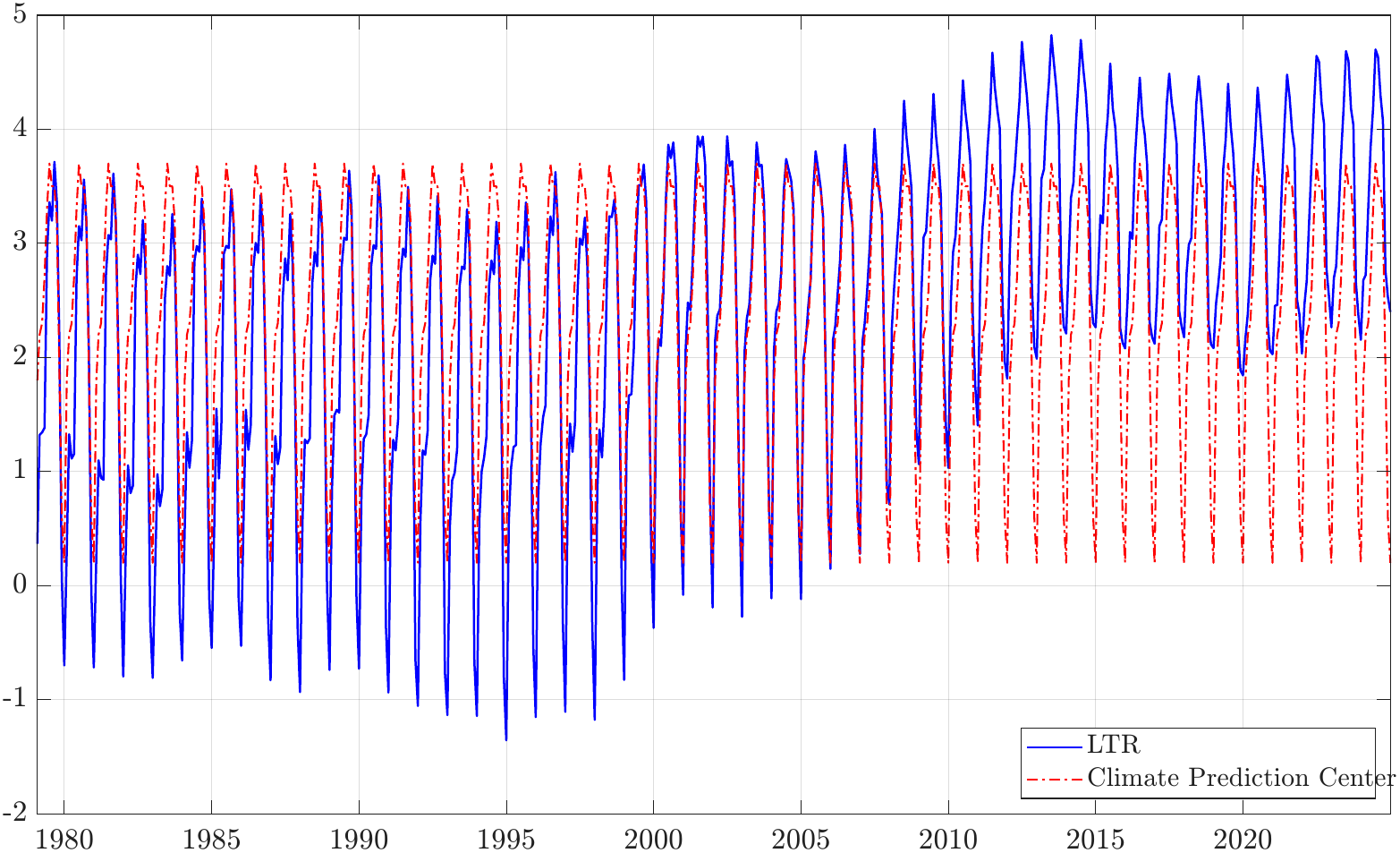}
     \end{center}
    \caption{Comparison of Trade Winds West climate normals: average of the 120 SST anomalies in the Nin\~{o} 3.4 region, estimated using the regularized local trigonometric regression (LTR) approach, versus Climate Prediction Center (CPC) estimates. \label{fig:twwcn}}
\end{figure}

Similar considerations hold for the other wind series in the dataset. Figure \ref{fig:cpac} compares the estimates of the climate normals and anomalies for the Central Pacific trade winds series for the period January 2010 - December 2024, obtained by the regularized LPR approach and by the CPC. The plot covers the period during which the climate normals estimated by the CPR are systematically below those estimated accounting for a potential trend.

It should be reminded that the differences are also due to the varying reference period, which in our case is continuously updated, while in the official estimates relates to the 3 decades 1991-2020. However,  the major differences are due to accounting for local trends ($\lambda > 0$), and to estimating, rather than fixing to 30 years the bandwidth (to 30 years).
The adoption of the kernel (Epanechnikov rather than uniform) plays a minor role for the anomaly estimates, but has an impact for the selection of the bandwidth and shrinkage intensity, in that the mean square estimation error is a smooth function of $\lambda$ and $m$.

\begin{figure}
\begin{center}
    \includegraphics[width=1\textwidth]{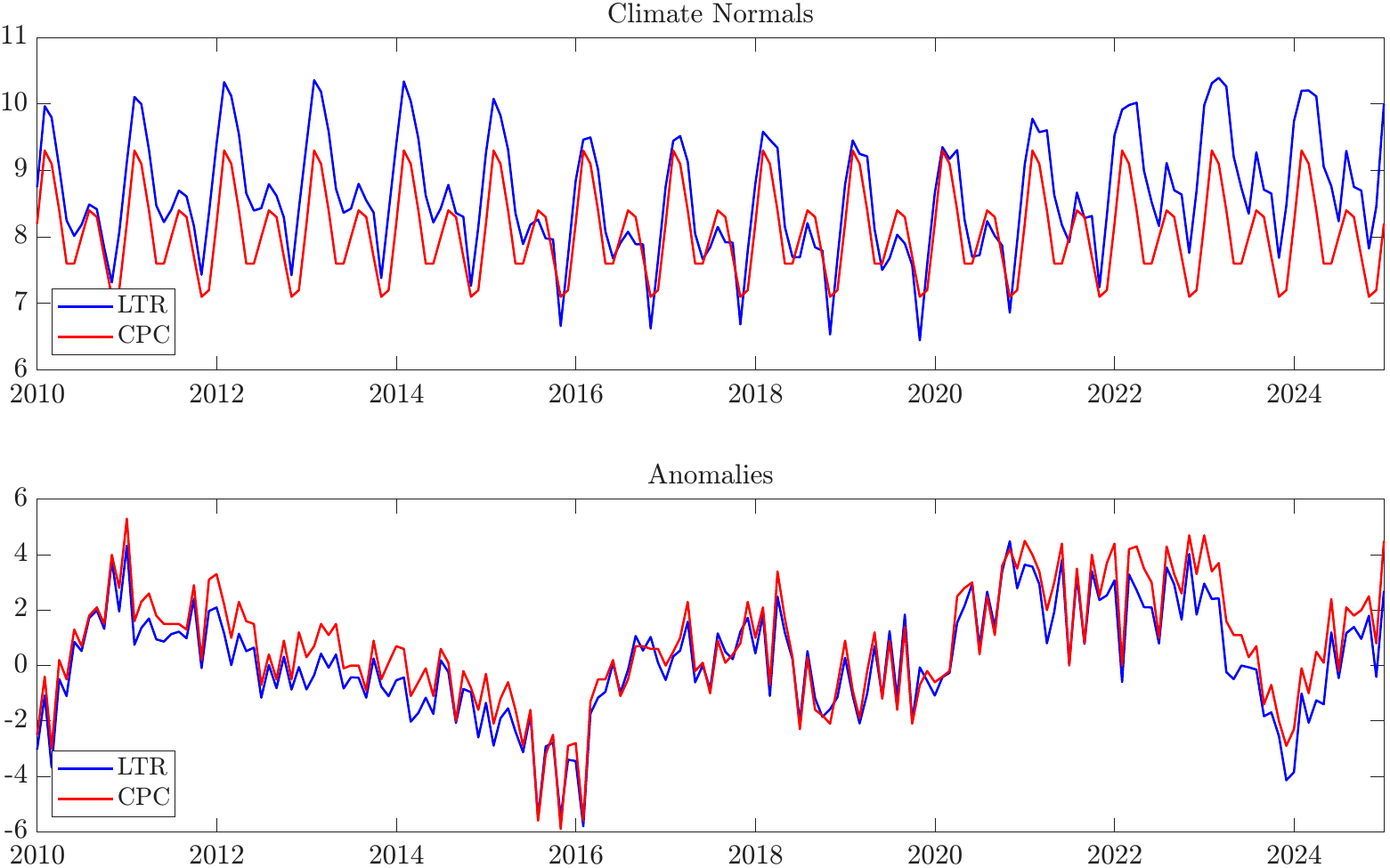}
     \end{center}
\caption{Comparison of Trade Winds Central Pacific climate normals and anomalies: average of LTR estimates versus Climate Prediction Center (CPC) estimate. \label{fig:cpac}}
\end{figure}

%\begin{figure}
%Comparison of climate normals: average of LTR CN versus Climate Prediction Center (CPC) estimate.
%\begin{center}
%    \includegraphics[width=4.0in]{gTWCcn.pdf}
%     \end{center}
%\end{figure}

%\subsection{Time series properties of anomalies}

The wind stress anomalies estimated with the WMO filter (\ref{eq:w0}) are more persistent and possibly nonstationary. This is revealed by the distribution
of the statistic $\omega_0$ in (\ref{eq:ch}), where $e_t$ are the demeaned anomalies, over the 228 wind series. Table \ref{tab:winds2} reports the mean, median, the 95th percentile and the proportion of rejections of the null of stationarity at the 5\% significance level (i.e, the proportion of values that are larger than 0.47).
When the anomalies are estimated by the LRT approach, considering the minimun mean square estimates of  ($\lambda,m$), the evidence in favour of stationarity is much stronger, relative to the WMO anomalies. The rejection rate with $M=12$ is a mere 3.5\% as opposed to 55.3\%.
 \begin{table}
\centering
\small{\begin{tabular}{lrrrclrrr}
\multicolumn{4}{c}{Local Trigonometric Regression} & & \multicolumn{4}{c}{WMO method} \\
 	&	$M=6$	&	$M=9$	&	$M=12$	&	&	  &	$M=6$	&	$M=9$	&	$M=12$	\\
Mean	&	0.285	&	0.225	&	0.196	&	&	Mean	&	0.802	&	0.625	&	0.537	\\
Median	&	0.238	&	0.191	&	0.167	&	&	Median	&	0.773	&	0.603	&	0.522	\\
95th percentile	&	0.678	&	0.509	&	0.448	&	&	95th percentile	&	1.540	&	1.191	&	1.020	\\
Rejection rate	&	0.189	&	0.079	&	0.035	&	&	Rejection rate	&	0.697	&	0.636	&	0.553	\\
\end{tabular}}
\caption{\label{tab:winds2} Distribution (mean, median, 95th percentile and proportion of values greater than 0.47)  of the statistic $\omega_0$ for different values of the truncation parameter $M$ used for estimating the long run variance of the demeaned anomalies. }
\end{table}

\section{Concluding Remarks }
\label{sec:concl}
The measurement of climate normals and anomalies poses some interesting challenges, in the presence of climate change, such that both climate trend and seasonality are possibly time-varying. The study of anomalies is quintessential to the investigation of interannual climate variability (e.g. the ENSO phenomenon).
While we support the idea of applying a real time filter (whose properties are well understood), the latter should be capable of accounting for climate trends and evolving seasonality. Our contribution is to design enhancements of the standard climate normals filters that can achieve this task.
%Our investigation has focused on time series that result from the temporal aggregation of daily or higher frequency series. An interesting question concerns the use of disaggregate data etc to test the aggregation of daily climate normals versus the definition of normals from aggregated monthly series.
In particular, we have framed the problem of distilling climate anomalies into a local  trigonometric regression problem,  and have considered the selection of the three crucial ingredients, kernel, bandwidth and shrinkage intensity, by offering a selection criterion based on the minimization of the real time estimation error.

Our empirical illustrations, which motivated our proposal, led to conclude that while for the sea surface temperatures the WMO's traditional 30-year climate normals appear adequate, the wind series display varying trends and seasonality which require a different treatment. SST are indeed less affected by trend and seasonal permanent changes, and the climate normals are sufficiently stable to support the use of the signal extraction filters currently in use.
For both groups of series the interannual component is a major source of variability.

Though our treatment and case studies dealt with monthly time series, our approach easily extends to the daily frequency of observation.
Finally, we have only been concerned with one sided filters for real time estimation; in the previous sections the optimal two-sided filters were only relevant for framing the issue in a local trigonometric regression framework and for defining seasonal  kernels. This is not a serious limitation since the main use of anomalies is modelling and out-of-sample prediction, which require that the input series are measurable transformations of the information available at the reference time.

    % references

    % appendix (put online supplement in a separate file)

%    \section*{Appendix A: Proofs of Results}
%    \renewcommand{\theequation}{A.\arabic{equation}}
%    \renewcommand{\thesection}{A}
%    \setcounter{equation}{0}
%
%    \medskip
%
%    \textbf{Proof of ...:}
%
%    ---insert proof---
%
%    \hfill$\square$\\

\end{document}